\newtheorem{theorem}{Theorem}[section]
\newtheorem{proposition}{Proposition}[section]
\newtheorem{lemma}{Lemma}[section]
\newtheorem{corollary}{Corollary}[section]
\newcommand{\beqa}{\begin{eqnarray}}
\newcommand{\eeqa}{\end{eqnarray}}
\newcommand{\rf}[1]{(\ref{#1})}
\numberwithin{equation}{section}
\begin{document}
\begin{flushright}
YITP-SB-12-
\end{flushright}

\bigskip

\begin{center}
\textbf{\Large Antiperiodic spin-1/2 $XXZ$ quantum chains by separation of
variables:}\vspace{2pt}

\textbf{\Large \ Complete spectrum and form factors}

\vspace{45pt}

{\large G.~Niccoli}\footnote{%
YITP, Stony Brook University, New York 11794-3840, USA,
niccoli@max2.physics.sunysb.edu}{\large \ ,}

\vspace{50pt}

\vspace{80pt}
\end{center}

\begin{small}
\begin{itemize}
\item[] \textbf{Abstract}\,\,\,In this paper we consider the spin 1/2 highest
weight representations for the 6-vertex Yang-Baxter algebra on a finite
lattice and analyze the integrable quantum models associated to the
antiperiodic transfer matrix. For these models, which in
the homogeneous limit reproduces the $XXZ$ spin 1/2 quantum chains with
antiperiodic boundary conditions, we obtain in the framework of
Sklyanin's quantum separation of variables (SOV) the following results: \textsf{I)} The complete
characterization of the transfer matrix spectrum (eigenvalues/eigenstates)
and the proof of its simplicity. \textsf{II)} The reconstruction of all local
operators in terms of Sklyanin's quantum separate variables. \textsf{III)%
} One determinant formula for the scalar products of \textit{separates states%
}, the elements of the matrix in the scalar product are sums over the
SOV spectrum of the product of the coefficients of the states. \textsf{IV)} The form
factors of the local spin operators on the transfer matrix eigenstates by a
one determinant formula given by simple modifications of the scalar product
formula.
\end{itemize}
\end{small}

\newpage

\tableofcontents

\newpage

\section{Introduction}
The exact and complete solution of the integrable quantum models by the computation of their spectrum and dynamics is a
central issue for the mathematical physics as they play an important role in
different research areas. The general correspondence existing between these 1-dimensional
quantum models\footnote{%
See \cite{ID91,McCoy94} and references therein.} and 2-dimensional
(exactly solvable) models of classical statistical mechanics\footnote{%
See \cite{BaxBook} and reference therein.} gives an important example of
this statement. Another important application is to the quantum statistical
systems at finite temperature where the integrable Hamiltonians are used to
define partition functions and quantum thermal averages\footnote{%
That is the thermodynamical Bethe ansatz \cite{DMB69}-\cite{Za90}.}. Here, we present an approach to achieve the exact and complete solution of lattice integrable quantum models in the framework of the quantum
inverse scattering method (QISM) \cite{SF78}-\cite{IK82}. In particular, this approach is
addressed to the large class of integrable quantum models whose spectrum\footnote{
In the QISM formulation the quantum integrable structure (i.e. the complete
set of commuting conserved charges) of the model is generated by the
transfer matrix, the trace of the monodromy matrix satisfying the
Yang-Baxter algebra or generalization of it.} (eigenvalues \& eigenstates) can be determined by implementing
Sklyanin's quantum separation of variables (SOV) method \cite{Sk1}-\cite{Sk3}%
. Let us comment that the approach here
presented can be considered as the generalization to the SOV framework of
the Lyon group method\footnote{Always in the ABA framework, see also \cite{K01}-\cite{CM07} for the
extension of this method to the higher spin quantum chains and \cite{KKMNST07}-\cite{KKMNST08} for the generalization to the reflection algebra case.} \cite{KitMT99}-%
\cite{MaiT00} which was instead developed in the algebraic Bethe ansatz
(ABA) framework \cite{SF78}-\cite{FST80}.

This approach has been recently developed in the case of the lattice
quantum sine-Gordon model \cite{FST80,IK82}\ and the $\tau _{2}$-model \cite%
{Ba04} associated by QISM to cyclic representations of the 6-vertex
Yang-Baxter algebra. In particular, in the papers \cite{NT-10}-\cite{N-11}
the complete SOV spectrum characterization has been constructed for the
lattice quantum sine-Gordon model while in the paper\footnote{See also the series of works \cite{GIPS06}-\cite{GIPS09} for previous
analysis by SOV method of the $\tau _{2}$-model and for some first result
concerning the computation of the form factors of local operators in the
special case of the generalized Ising model.} \cite{GN12} it has been
derived for the $\tau _{2}$-model and consequently for the chiral Potts
model \cite{BS90}-\cite{TarasovSChP}, by exploiting the well known links
between these two models \cite{BS90}. Finally, in the papers \cite{GMN12-SG,GMN12-T2} it has been shown how to reconstruct local operators in terms of the quantum separate variables and write in a determinant form the scalar products of separate states\footnote{See Section \ref{SP} for the definition of these states in our current model.} and the matrix elements of local operators on transfer matrix eigenstates. 

In the present article we develop this approach for quantum models associated by QISM to highest weight representations of the 6-vertex
Yang-Baxter algebra. In particular, we consider the representations
corresponding to one of the most prototypical lattice integrable quantum
model, i.e. the $XXZ$ spin 1/2 quantum chain \cite{H28}. It represents one of the
best known quantum models under periodic boundary conditions and a very
large literature is dedicated to it \cite{Be31}-\cite{LM66}. In particular,
it was the basic example for the application of the ABA method and then for
the implementation of the Lyon group method to compute matrix elements of
local operators.

The circumstance \cite{Sk2} interesting for us is that it is enough to
change the boundary conditions into antiperiodic ones that the algebraic
Bethe ansatz does not work anymore while Sklyanin's quantum separation
of variables can be used to analyze the system. Moreover, it is worth remarking that the thermodynamical limits of the
periodic and antiperiodic case are naturally expected coinciding and then the $XXZ$ quantum spin 1/2 chain with
antiperiodic boundary conditions is a natural prototype for which develop
the full program of analysis from the spectrum up to the correlation
functions in the SOV framework. As for this model we can take advantage from the known results worked out for the
periodic chain in the ABA framework by the Lyon group \cite{IKitMT99}-\cite{KKMST07} and  compare them with our findings.

Let us mention that so far the results concerning the antiperiodic case are restricted to the construction of the Baxter Q-operator\footnote{%
See also \cite{YB95} for the construction of the Q-operator in the higher
spin $XXZ$ quantum chain with twisted boundary conditions.} \cite{BBOY95} and
that of the functional separation of variables of Sklyanin for the $XXX$ case 
\cite{Sk2} extended in \cite{NWF09} to the $XXZ$ case. It is worth pointing out that Sklyanin's separation of variables in its
functional version defines representations of the Yang-Baxter algebra on
space of symmetric functions and leads only to the representation of the
wave functions of the transfer matrix eigenstates. In fact, the explicit construction of the SOV representation as well as
of the transfer matrix eigenstates in the original representation space of the quantum spin chain are missing. These are important information in order to
achieve the goal to compute matrix elements of the local operators and one
of the tasks of the present article is to cover these gaps. 

\subsection{Motivation for the use of SOV method}

In the framework of quantum integrability, there are several methods to
analyze the spectral problem as for example the coordinate Bethe ansatz \cite%
{Be31}, \cite{BaxBook} and \cite{ABBBQ87}, the Baxter Q-operator method \cite%
{BaxBook}, the algebraic Bethe ansatz \cite{SF78}-\cite{FST80}, the analytic
Bethe ansatz \cite{Re83-1}-\cite{Re83-2}. However, they suffer in general
from one or more of the following problems: i) Reduced applicability; i.e.
there exist important examples of quantum integrable models to which some of these
methods do not apply. ii) Analysis reduced only to the set of eigenvalues;
i.e. some of them do not allow for the construction of the eigenstates. iii) Lack of
completeness proof; i.e. the completeness of the spectrum description is not
assured by the methods but has to be separately proven\footnote{%
Note that there are only a few examples of integrable quantum models where
the completeness has been proven in the ABA framework, including the $XXX$
Heisenberg model; see \cite{MTV09} and references therein.}.

The SOV method of Sklyanin is a more promising approach: It works for a
large class of integrable quantum models to which ABA does not apply; it
leads to both the eigenvalues and the eigenstates of the transfer matrix
with a spectrum construction (which under simple conditions) has as built-in
feature its completeness. Moreover, for the so far analyzed cases \cite%
{NT-10}-\cite{N-11}, \cite{GN12}, \cite{N12-1}-\cite{N12-3} in the SOV
framework it was an easy task to prove the simplicity of the transfer matrix
or to add to it commuting operators which form a complete\footnote{%
The completeness (i.e. the non-degeneracy of spectrum) of the set of
commuting conserved charges is a natural requirement to state the complete
integrability of the quantum model as it represents the natural quantum
analogue to the classical definition of complete integrability which
requires the existence of a maximal number of independent and mutually in
involution integrals of motions.} set of commuting conserved charges of the
quantum model.

\section{Antiperiodic 6-vertex quantum integrable chain}

\subsection{Representations on spin-1/2 chain of Yang-Baxter algebra}

Let us define a class of representations of the 6-vertex Yang-Baxter algebra
on spin-1/2 quantum chains. More in details, let us denote with $\sigma
_{n}^{\pm }$ and $\sigma _{n}^{z}$ the generators of $\mathsf{N}$ independent
(local) $sl(2)$ algebras:
\begin{equation}
\lbrack \sigma _{n}^{z},\sigma _{m}^{\pm }]=\pm \delta _{n,m}\sigma
_{n}^{\pm },\text{ \ }[\sigma _{n}^{+},\sigma _{m}^{-}]=2\delta _{n,m}\sigma
_{n}^{z},
\end{equation}%
and let us introduce the 2-dimensional linear spaces (local quantum spaces
of the chain) R$_{n}\simeq $ $\mathbb{C}^{2}$. In any linear space R$_{n}$
is define a spin-1/2 representation of the $sl(2)$ algebra where the
generators of the algebra admit the standard representation in terms of $%
2\times 2$ Pauli matrices. Then, for any local quantum space R$_{n}$ with $%
n\in \{1,...,\mathsf{N}\}$, we can define the so-called Lax operator $%
\mathsf{L}_{0n}(\lambda )\in $ End$($R$_{0}\otimes $R$_{n})$: 
\begin{equation}
\mathsf{L}_{0n}(\lambda )\equiv \left( 
\begin{array}{cc}
A_{n}(\lambda ) & B_{n} \\ 
C_{n} & D_{n}(\lambda )%
\end{array}%
\right) _{0}=\left( 
\begin{array}{cc}
x_{+}(\lambda )+x_{-}(\lambda )\sigma _{n}^{z} & (q-q^{-1})\sigma _{n}^{-},
\\ 
(q-q^{-1})\sigma _{n}^{+} & x_{-}(\lambda )+x_{+}(\lambda )\sigma _{n}^{z}%
\end{array}%
\right) _{0},
\end{equation}%
where we have denoted $q=e^{\eta }\in \mathbb{C}$ and%
\begin{equation}
x_{\pm }(\lambda )\equiv (\lambda q-(q\lambda )^{-1}\pm \lambda -\lambda
^{-1})/2.
\end{equation}%
$\mathsf{L}_{0n}(\lambda )$ is a solution of the Yang-Baxter equation:%
\begin{equation}
R_{12}(\lambda /\mu )\mathsf{L}_{1n}(\lambda )\mathsf{L}%
_{2n}(\mu )=\mathsf{L}_{2n}(\mu )\mathsf{L}%
_{1n}(\lambda )R_{12}(\lambda /\mu ),
\end{equation}%
w.r.t. the 6-vertex R-matrix: 
\begin{equation}
R_{12}(\lambda )\equiv \left( 
\begin{array}{cccc}
\lambda q-(q\lambda )^{-1} & 0 & 0 & 0 \\ 
0 & \lambda -\lambda ^{-1} & q-q^{-1} & 0 \\ 
0 & q-q^{-1} & \lambda -\lambda ^{-1} & 0 \\ 
0 & 0 & 0 & \lambda q-(q\lambda )^{-1}%
\end{array}%
\right).
\end{equation}%
Then, we can introduce the so-called monodromy matrix:%
\begin{equation}
\mathsf{M}_{0}(\lambda )\equiv \left( 
\begin{array}{cc}
\mathsf{A}(\lambda ) & \mathsf{B}(\lambda ) \\ 
\mathsf{C}(\lambda ) & \mathsf{D}(\lambda )%
\end{array}%
\right) \equiv \mathsf{L}_{0\mathsf{N}}(\lambda _{\mathsf{N}})\cdots \mathsf{%
L}_{01}(\lambda _{1})\in \text{End}(\text{R}_{0}\otimes \mathcal{R}_{\mathsf{%
N}}),
\end{equation}%
where $\mathcal{R}_{\mathsf{N}}\equiv \otimes _{i=1}^{\mathsf{N}}$R$_{n}$,$\
\lambda _{n}\equiv \lambda /\eta _{n}$ and the $\eta _{n}\in \mathbb{C}$ are
called inhomogeneity parameters. The monodromy matrix $\mathsf{M}(\lambda )$
is also a solution of the Yang-Baxter equation: 
\begin{equation}
R_{12}(\lambda /\mu )\mathsf{M}_{1}(\lambda )\mathsf{M}_{2}(\mu )=\mathsf{M}%
_{2}(\mu )\mathsf{M}_{1}(\lambda )R_{12}(\lambda /\mu )\,,  \label{YBE0}
\end{equation}%
and its elements $\mathsf{A},$ $\mathsf{B},$ $\mathsf{C}$ and $\mathsf{D}$ are the generators of a 2$^{\mathsf{N}}$-dimensional representation of
Yang-Baxter algebra on $\mathcal{R}_{\mathsf{N}}$.

\subsubsection{Left and right representations of Yang-Baxter algebra}\label{v-scalar-product}

Let us denote with $|k,n\rangle $ the standard spin basis for the $2$%
-dimensional linear space R$_{n}$:%
\begin{equation}
\sigma _{n}^{z}|k,n\rangle =k|k,n\rangle ,\text{ \ }k\in \{-1,1\},
\end{equation}%
i.e. the $\sigma _{n}^{z}$-eigenbasis of the local space R$_{n}$. Let L$_{n}$
be the linear space dual of R$_{n}$ and let $\langle k,n|$ be the elements
of the dual spin basis defined by:%
\begin{equation}
\langle k,n|k^{\prime },n\rangle =(|k,n\rangle ,|k^{\prime },n\rangle
)\equiv \delta _{k,k^{\prime }}\text{ \ \ }\forall k,k^{\prime }\in \{-1,1\},
\end{equation}%
i.e. the covectors $\langle k,n|$ define the $\sigma _{n}^{z}$-eigenbasis in
the dual linear space L$_{n}$. In the \textit{left} (covectors) and \textit{%
right} (vectors) linear spaces:%
\begin{equation}
\mathcal{L}_{\mathsf{N}}\equiv \otimes _{n=1}^{\mathsf{N}}\text{L}_{n},\text{
\ \ \ \ }\mathcal{R}_{\mathsf{N}}\equiv \otimes _{n=1}^{\mathsf{N}}\text{R}%
_{n},
\end{equation}%
the representations of the local $sl(2)$ generators induce left and right
representations of dimension $2^{\mathsf{N}}$ of the monodromy matrix
elements, i.e. 2$^{\mathsf{N}}$-dimensional representations of Yang-Baxter
algebra with $\mathsf{N}$ parameters the inhomogeneities.

\subsubsection{Antiperiodic transfer matrix $\mathsf{\bar{T}}(\protect%
\lambda )$}

The Yang-Baxter equations (\ref{YBE0}) and the commutation relations:%
\begin{equation}
\lbrack R_{12}(\lambda /\mu ),\Sigma _{1}^{(\alpha ,b)}\otimes \Sigma
_{2}^{(\alpha ,b)}]=0,
\end{equation}%
where:%
\begin{equation}
\Sigma _{0}^{(\alpha ,b)}=\left( \sigma _{0}^{x}\right) ^{b}\left( 
\begin{array}{cc}
e^{\alpha } & 0 \\ 
0 & e^{-\alpha }%
\end{array}%
\right) _{0}\text{ \ \ \ \ \ }\forall \alpha \in \mathbb{C},\text{ }b=0,1,
\end{equation}%
imply that the transfer matrix:%
\begin{equation}
\mathsf{T}^{(\alpha ,b)}(\lambda )=\text{tr}_{0}[\Sigma _{0}^{(\alpha ,b)}%
\mathsf{M}_{0}(\lambda )],
\end{equation}%
for any fixed $\Sigma _{0}^{(\alpha ,b)}$, generates a one-parameter family
of commuting operators on $\mathcal{R}_{\mathsf{N}}$. Let us recall that the
so-called quantum determinant: 
\begin{equation}
\det \mathsf{M}(\lambda )\,\equiv \,\mathsf{A}(\lambda )\mathsf{D}(\lambda
/q)-\mathsf{B}(\lambda )\mathsf{C}(\lambda /q),  \label{q-det-f}
\end{equation}%
is a central element\footnote{%
The centrality of the quantum determinant in the Yang-Baxter algebra was
first discovered in \cite{IK81}; see also \cite{IK09} for an historical note.%
} of the Yang-Baxter algebra (\ref{YBE0}) which admits the following
factorized form: 
\begin{equation}
\det \mathsf{M}(\lambda )=\prod_{n=1}^{\mathsf{N}}\text{det}_{\text{q}%
}\mathsf{L}_{0n}(\lambda _{n}),
\end{equation}%
in terms of the local quantum determinants: 
\begin{equation}
\det \mathsf{L}_{0n}(\lambda )\equiv A_{n}(\lambda )D_{n}(\lambda
/q)-B_{n}C_{n},
\end{equation}%
which explicitly reads:%
\begin{equation}
\det \mathsf{M}(\lambda )\equiv -a(\lambda )d(\lambda /q),\text{ \ \ }%
a(\lambda )\equiv -\prod_{n=1}^{\mathsf{N}}(\frac{\lambda q}{\eta _{n}}-%
\frac{\eta _{n}}{\lambda q}),\text{ \ \ \ \ }d(\lambda )\equiv \prod_{n=1}^{%
\mathsf{N}}(\frac{\lambda }{\eta _{n}}-\frac{\eta _{n}}{\lambda }).
\end{equation}%
In the following we will analyze the spectral problem for the antiperiodic
transfer matrix:%
\begin{equation}
\mathsf{\bar{T}}(\lambda )\equiv \mathsf{B}(\lambda )+\mathsf{C}(\lambda )=%
\mathsf{T}^{(\alpha =0,b=1)}(\lambda ),
\end{equation}%
then it is important to point out the conditions under which this transfer
matrix is normal:

\begin{lemma}
$\mathsf{I)}$ In the massless regime, i.e. for $q=e^{\eta }$ a pure phase ($%
\eta \in i\mathbb{R}$), when all the inhomogeneities \{$\eta _{1},...,\eta _{%
\mathsf{N}}$\} are real numbers then the transfer matrix $\mathsf{\bar{T}}%
(\lambda )$ is a one parameter family of normal operators and the family:%
\begin{equation}
i\mathsf{\bar{T}}(\lambda )
\end{equation}%
is self-adjoint for any $\lambda q^{1/2}\in \mathbb{R}$.

$\mathsf{II)}$  In the massive regime, i.e. $q=e^{\eta }\in \mathbb{R}^{+}$ ($%
\eta \in \mathbb{R}$), when all the inhomogeneities \{$\eta _{1},...,\eta _{%
\mathsf{N}}$\} are pure phases then the transfer matrix $\mathsf{\bar{T}}%
(\lambda )$ is a one parameter family of normal operators and the family:%
\begin{equation}
i^{\mathsf{e}_{\mathsf{N}}}\mathsf{\bar{T}}(\lambda ),
\end{equation}%
\ \ \ where $\mathsf{e}_{\mathsf{N}}=\{1$ \ for $\mathsf{N}$ even, $0$ for 
$\mathsf{N}$ odd$\}$, is self-adjoint for any $\lambda q^{1/2}$ a pure
phase.
\end{lemma}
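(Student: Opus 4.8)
The plan is to compute the Hermitian adjoint $\mathsf{\bar{T}}(\lambda)^{\dagger}$ with respect to the standard Hermitian structure on $\mathcal{R}_{\mathsf{N}}=\otimes_{n}\mathrm{R}_{n}$ (the one making the $\sigma_{n}^{z}$-basis $|k,n\rangle$ orthonormal) and to show that, up to an explicit scalar, it is again a member of the commuting family $\{\mathsf{\bar{T}}(\mu)\}_{\mu}$. Normality then follows from $[\mathsf{\bar{T}}(\lambda),\mathsf{\bar{T}}(\mu)]=0$, the commutativity already recorded above — a consequence of the Yang--Baxter relation (\ref{YBE0}) together with $[R_{12}(\lambda/\mu),\sigma_{1}^{x}\otimes\sigma_{2}^{x}]=0$ for $(\alpha,b)=(0,1)$ — while the self-adjointness statements follow by restricting the adjoint identity to the indicated locus of the spectral parameter.

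First I would conjugate a single Lax operator. On $\mathrm{R}_{n}\simeq\mathbb{C}^{2}$ one has $(\sigma_{n}^{z})^{\dagger}=\sigma_{n}^{z}$ and $(\sigma_{n}^{\pm})^{\dagger}=\sigma_{n}^{\mp}$, so taking the entrywise Hermitian conjugate of $\mathsf{L}_{0n}(\lambda)$ in the quantum space amounts to the substitution $x_{\pm}(\lambda)\mapsto\overline{x_{\pm}(\lambda)}$, $q-q^{-1}\mapsto\overline{q-q^{-1}}$, $\sigma_{n}^{\pm}\mapsto\sigma_{n}^{\mp}$. In the massless regime the reality assumptions $\bar{q}=q^{-1}$, $\eta_{n}\in\mathbb{R}$ give $\overline{q-q^{-1}}=-(q-q^{-1})$ and $\overline{x_{\pm}(\lambda)}=-x_{\pm}(1/\bar{\lambda})$; feeding in the elementary identities $x_{+}(1/(q\mu))=-x_{+}(\mu)$, $x_{-}(1/(q\mu))=x_{-}(\mu)$ and $x_{\pm}(-\mu)=-x_{\pm}(\mu)$, all immediate from the definition of $x_{\pm}$, one rewrites $\mathsf{L}_{0n}(\lambda/\eta_{n})^{\dagger_{n}}$ through $\mathsf{L}_{0n}\big((q^{-1}\bar{\lambda})/\eta_{n}\big)$ up to conjugation by $\sigma_{n}^{x}$ and a controlled pattern of signs on the auxiliary entries. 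Because the $\mathsf{L}_{0n}$ with distinct $n$ have commuting matrix entries, quantum-space conjugation of $\mathsf{M}_{0}(\lambda)$ acts entrywise and leaves the order of the auxiliary $2\times2$ product untouched; one then pulls the $\sigma_{n}^{x}$ out into the global spin flip $\prod_{n}\sigma_{n}^{x}$ and invokes the two structural facts that the six-vertex Lax operator is transposition-symmetric, $\mathsf{L}_{0n}^{t_{0}}=\mathsf{L}_{0n}^{t_{n}}$, and that the antiperiodic twist obeys $(\sigma_{0}^{x})^{t_{0}}=\sigma_{0}^{x}$, whence $\text{tr}_{0}[\sigma_{0}^{x}X^{t_{0}}]=\text{tr}_{0}[\sigma_{0}^{x}X]$ — so that the interchange $\mathsf{B}\leftrightarrow\mathsf{C}$ forced by $(\sigma_{n}^{\pm})^{\dagger}=\sigma_{n}^{\mp}$ is exactly the one the twisted trace symmetrizes. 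The outcome should be $\mathsf{\bar{T}}(\lambda)^{\dagger}=-\,\mathsf{\bar{T}}(q^{-1}\bar{\lambda})$. The massive case runs identically except that now $\bar{q}=q$, $\overline{q-q^{-1}}=q-q^{-1}$ and $\overline{\eta_{n}}=\eta_{n}^{-1}$, so $\overline{x_{\pm}(\lambda)}$ picks up an extra factor of $i$ per site; its $\mathsf{N}$-th power is $i^{\mathsf{e}_{\mathsf{N}}}$, and one obtains $\mathsf{\bar{T}}(\lambda)^{\dagger}=(-1)^{\mathsf{e}_{\mathsf{N}}}\mathsf{\bar{T}}(\lambda^{\ast})$ for the corresponding spectral-parameter involution $\lambda\mapsto\lambda^{\ast}$.

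Normality is then immediate. For the self-adjointness statements, restrict $\lambda$ to the stated locus: in the massless case $\lambda q^{1/2}\in\mathbb{R}$ forces $\bar{\lambda}=\lambda q$, hence $q^{-1}\bar{\lambda}=\lambda$ and $\mathsf{\bar{T}}(\lambda)^{\dagger}=-\mathsf{\bar{T}}(\lambda)$, so $\big(i\mathsf{\bar{T}}(\lambda)\big)^{\dagger}=i\mathsf{\bar{T}}(\lambda)$; in the massive case the phase locus is precisely the fixed-point set of $\lambda\mapsto\lambda^{\ast}$, and one gets $\big(i^{\mathsf{e}_{\mathsf{N}}}\mathsf{\bar{T}}(\lambda)\big)^{\dagger}=i^{\mathsf{e}_{\mathsf{N}}}\mathsf{\bar{T}}(\lambda)$. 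The main obstacle is the bookkeeping in the middle step: one must check that the sign pattern acquired by the conjugated Lax operators, the global spin flip $\prod_{n}\sigma_{n}^{x}$, and the interplay of the two transpositions all cancel against the antiperiodic trace so that $\mathsf{\bar{T}}(\lambda)^{\dagger}$ genuinely closes up inside the one-parameter family — and not, say, into the transfer matrix of a spatially reflected chain, of a chain with inverted inhomogeneities, or of a differently twisted one — and one must pin down the overall scalar, which should come out exactly $-1$ in the massless regime and $(-1)^{\mathsf{e}_{\mathsf{N}}}$ in the massive one.
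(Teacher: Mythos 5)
Your overall strategy coincides with the paper's: compute $\mathsf{\bar{T}}(\lambda)^{\dagger}$ by conjugating the Lax operators, show that it closes into the commuting family $\{\mathsf{\bar{T}}(\mu)\}_{\mu}$ up to an explicit scalar, deduce normality from $[\mathsf{\bar{T}}(\lambda),\mathsf{\bar{T}}(\mu)]=0$, and obtain self-adjointness by restricting the adjoint identity to the stated locus. The mechanics differ slightly: the paper records $\mathsf{L}_{0n}(\lambda)^{\dagger}=\sigma_{0}^{y}\mathsf{L}_{0n}(\lambda^{\ast}/q)\sigma_{0}^{y}$ in the massless regime and $\sigma_{0}^{x}\mathsf{L}_{0n}(-1/(\lambda^{\ast}q))\sigma_{0}^{x}$ in the massive one, i.e.\ it conjugates in the \emph{auxiliary} space, which directly permutes the monodromy entries and yields $\mathsf{\bar{T}}(\lambda)^{\dagger}=-\mathsf{\bar{T}}(\lambda^{\ast}/q)$, respectively $\mathsf{\bar{T}}(-1/(\lambda^{\ast}q))$, with no further work; your route through a quantum-space $\sigma_{n}^{x}$, the global spin flip $\prod_{n}\sigma_{n}^{x}$ and the crossing/twisted-trace identities reaches the same massless formula but with the extra bookkeeping that you yourself flag as unfinished. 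Your massless endgame ($\bar{\lambda}=\lambda q$ on the locus, hence anti-self-adjointness of $\mathsf{\bar{T}}$ and self-adjointness of $i\mathsf{\bar{T}}$) is exactly the paper's.

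The one genuine flaw is your account of the origin of $(-1)^{\mathsf{e}_{\mathsf{N}}}$ in the massive case. It does \emph{not} come from ``an extra factor of $i$ per site'': a per-site phase raised to the $\mathsf{N}$-th power would depend on $\mathsf{N}$ modulo $4$, not on its parity, and $i^{\mathsf{N}}\neq i^{\mathsf{e}_{\mathsf{N}}}\neq(-1)^{\mathsf{e}_{\mathsf{N}}}$. In the paper the massive adjoint relation carries \emph{no} prefactor, $\mathsf{\bar{T}}(\lambda)^{\dagger}=\mathsf{\bar{T}}(-1/(\lambda^{\ast}q))$; and note that the involution $\lambda\mapsto-1/(\bar{\lambda}q)$ has an empty fixed-point set for $q>0$, so ``restricting to the fixed-point locus'' cannot by itself produce self-adjointness. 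What actually happens is that on the locus $|\lambda q^{1/2}|=1$ one has $-1/(\bar{\lambda}q)=-\lambda$, and $\mathsf{\bar{T}}(\lambda)=\mathsf{B}(\lambda)+\mathsf{C}(\lambda)$ is a Laurent polynomial of degree $\mathsf{N}-1$ of definite parity (even for $\mathsf{N}$ odd, odd for $\mathsf{N}$ even, cf.\ (\ref{set-t})), whence $\mathsf{\bar{T}}(-\lambda)=(-1)^{\mathsf{N}-1}\mathsf{\bar{T}}(\lambda)=(-1)^{\mathsf{e}_{\mathsf{N}}}\mathsf{\bar{T}}(\lambda)$. Your final formula is therefore correct, but the parity argument --- which is the true source of the $\mathsf{e}_{\mathsf{N}}$-dependence --- is missing and must replace the ``factor of $i$ per site'' step.
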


\begin{proof}
In the case \textsf{I)} it is trivial to verify that the local Lax operators 
$\mathsf{L}_{0n}(\lambda )$ satisfies the following Hermitian conjugation property:%
\begin{equation}
\mathsf{L}_{0n}(\lambda )^{\dagger }\equiv \sigma _{0}^{y}\mathsf{L}_{0n}(\lambda ^{\ast
}/q)\sigma _{0}^{y},
\end{equation}%
where $\dagger $ means the complex conjugation plus the transposition w.r.t.
the local quantum space $n$. Then, for the monodromy matrix it
follows:%
\begin{equation}
\mathsf{M}(\lambda )^{\dagger }\equiv \left( 
\begin{array}{cc}
\mathsf{A}^{\dagger }(\lambda ) & \mathsf{B}^{\dagger }(\lambda ) \\ 
\mathsf{C}^{\dagger }(\lambda ) & \mathsf{D}^{\dagger }(\lambda )%
\end{array}%
\right) =\left( 
\begin{array}{cc}
\mathsf{D}(\lambda ^{\ast }/q) & -\mathsf{C}(\lambda ^{\ast }/q) \\ 
-\mathsf{B}(\lambda ^{\ast }/q) & \mathsf{A}(\lambda ^{\ast }/q)%
\end{array}%
\right) ,
\end{equation}%
when all the inhomogeneities \{$\eta _{1},...,\eta _{\mathsf{N}}$\} are real
numbers. Then the transfer matrix $\bar{\mathsf{T}}(\lambda )$ is normal for
any $\lambda \in \mathbb{C}$ and the statement in \textsf{I)} simply follows.

In the case \textsf{II)} it holds:%
\begin{equation}
\mathsf{L}_{0n}(\lambda )^{\dagger }\equiv \sigma _{0}^{x}\mathsf{L}_{0n}(-1/(\lambda ^{\ast
}q))\sigma _{0}^{x}.
\end{equation}%
Then, for the monodromy matrix it follows:%
\begin{equation}
\mathsf{M}(\lambda )^{\dagger }\equiv \left( 
\begin{array}{cc}
\mathsf{A}^{\dagger }(\lambda ) & \mathsf{B}^{\dagger }(\lambda ) \\ 
\mathsf{C}^{\dagger }(\lambda ) & \mathsf{D}^{\dagger }(\lambda )%
\end{array}%
\right) =\left( 
\begin{array}{cc}
\mathsf{D}(-1/(\lambda ^{\ast }q)) & \mathsf{C}(-1/(\lambda ^{\ast }q)) \\ 
\mathsf{B}(-1/(\lambda ^{\ast }q)) & \mathsf{A}(-1/(\lambda ^{\ast }q))%
\end{array}%
\right) ,
\end{equation}%
when all the inhomogeneities $\{\eta _{1},...,\eta _{\mathsf{N}}\}$ are pure
phases. Then the transfer matrix $\mathsf{\bar{T}}(\lambda )$ is normal for
any $\lambda \in \mathbb{C}$ and the statement in \textsf{II)} simply
follows.
\end{proof}

\subsection{Antiperiodic spin-1/2 $XXZ$ quantum chain}

In the framework of the quantum inverse scattering method, the integrability
of a quantum models is proven showing that the Hamiltonian of the models
belong to the one parameter families of commuting transfer matrices. In the
following, we solve the spectral problem and compute matrix elements of
local operators on the eigenstates of the antiperiodic transfer matrix $%
\mathsf{\bar{T}}(\lambda )$. It is then relevant to point out that such
analysis allows in particular to describe the antiperiodic spin-1/2 $XXZ$
quantum chain in the special case of the homogeneous limit $\eta
_{n}\rightarrow 1$. Indeed, its Hamiltonian:%
\begin{equation}
H=\sum_{n=1}^{\mathsf{N}}\left[ \sigma _{n}^{x}\sigma _{n+1}^{x}+\sigma
_{n}^{y}\sigma _{n+1}^{y}+\cosh \eta \,\sigma _{n}^{z}\sigma _{n+1}^{z}%
\right] \,,  \label{hamil}
\end{equation}%
with the following boundary conditions:%
\begin{equation}
\sigma _{\mathsf{N}+1}^{a}=\sigma _{1}^{x}\sigma _{1}^{a}\sigma
_{1}^{x}=\left( -1\right) ^{1-\delta _{a,x}}\sigma _{1}^{a},\text{ \ \ }%
a=x,y,z  \label{Antip-boundary}
\end{equation}%
where we have used the notation $\delta _{a,x}=\{1$ for $a=x,$ $0$ for $%
a=y,z\}$, is obtained in the homogeneous limit\footnote{%
Note that substituting in (\ref{logderiv}) the generic transfer matrix $\mathsf{T}^{(\alpha ,b)}$ we get the same Hamiltonian (\ref{hamil}) where the
boundary conditions are given by (\ref{Antip-boundary}) with $\sigma
_{1}^{x} $ substituted by $\Sigma _{1}^{(\alpha ,b)}$.} by:%
\begin{equation}
H=(q-q^{-1})\,\left. \frac{\partial \ln \mathsf{\bar{T}}(\lambda )}{\partial
\lambda }\right\vert _{\lambda =1,\eta _{n}=1}-\mathsf{N}\frac{(q+q^{-1})}{2}%
.  \label{logderiv}
\end{equation}

\section{SOV-representations for $\mathsf{\bar{T}}(\protect\lambda )$%
-spectral problem}

\label{SOV-Gen}According to Sklyanin's method \cite{Sk1}-\cite{Sk3}, a
separation of variable representation for the spectral problem of the
transfer matrix $\mathsf{\bar{T}}(\lambda )$ is defined as a representation
where the commutative family of operators $\mathsf{D}(\lambda )$ (or $%
\mathsf{A}(\lambda )$) is diagonal and with simple spectrum. In fact, it
holds:

\begin{theorem}
For any fixed $\mathsf{N}$-tuple of inhomogeneities $\{\eta _{1},...,\eta _{%
\mathsf{N}}\}\in \mathbb{C}$ $^{\mathsf{N}}$such that%
\begin{equation}
\eta _{a}\neq q^j\eta _{b}\text{ \ }\,\,\forall j\in \{-1,0,1\},\,\, a< b\in \{1,...,\mathsf{N}\}
\label{E-SOV}
\end{equation}%
there exists a SOV representation for the $\mathsf{\bar{T}}(\lambda )$%
-spectral problem; i.e. $\mathsf{D}(\lambda )$\ (or $\mathsf{A}(\lambda )$)\
is diagonalizable and with simple spectrum.
\end{theorem}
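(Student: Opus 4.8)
The plan is to build the SOV representation explicitly by diagonalizing $\mathsf{D}(\lambda)$ (equivalently $\mathsf{A}(\lambda)$) and then extracting the separate-variable structure from the Yang-Baxter commutation relations. First I would observe that $\mathsf{D}(\lambda)$ is a Laurent polynomial in $\lambda$ of a fixed form, namely $\mathsf{D}(\lambda)=\prod_{n=1}^{\mathsf N}\bigl(\lambda_n+\lambda_n^{-1}\,\text{(something)}\bigr)$-type expression built from the product of lower-right Lax entries $D_n(\lambda)=x_-(\lambda)+x_+(\lambda)\sigma_n^z$, so that $\mathsf{D}(\lambda)$ is already a \emph{diagonal} operator in the $\sigma^z$-eigenbasis: on a basis vector labelled by spins $(k_1,\dots,k_{\mathsf N})$ its eigenvalue is the Laurent polynomial $\prod_{n}(x_-(\lambda/\eta_n)+k_n x_+(\lambda/\eta_n))$. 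Because each factor $x_-(\lambda/\eta_n)+x_+(\lambda/\eta_n)k_n$ is, up to an overall monomial, a degree-one polynomial in $\lambda^2$ whose single zero is at $\lambda^2=\eta_n^2$ when $k_n=+1$ and at $\lambda^2=(q\eta_n)^{-2}\cdot(\text{const})$-type location when $k_n=-1$ (one should compute the two zeros of $x_-(\zeta)\pm x_+(\zeta)$ explicitly: they are $\zeta=1$ and $\zeta=q^{-1}$ in the normalized variable), the eigenvalue as a function of $\lambda$ is completely determined by which subset of the inhomogeneities $\{\eta_1,\dots,\eta_{\mathsf N}\}$ contributes its "$+$" zero and which contributes its "$-$" zero. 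Thus the $2^{\mathsf N}$ eigenvalues of $\mathsf{D}(\lambda)$ are naturally indexed by sign-strings, and the $\mathsf{D}$-eigenbasis coincides (up to normalization to be fixed later) with the original $\sigma^z$-basis.

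The second and central step is to show these $2^{\mathsf N}$ eigenvalue functions are \emph{pairwise distinct}, which is exactly simplicity of the spectrum. Two sign-strings give the same eigenvalue polynomial iff they produce the same multiset of zeros (counted with multiplicity) in $\lambda$. Writing the zero contributed by site $n$ as either $\eta_n$ (say, the "$k_n=+1$" value) or $q^{\pm 1}\eta_n$ (the "$k_n=-1$" value) — the precise powers must be read off from $x_\pm$ — a coincidence of eigenvalues forces an equality between a zero of the form $\eta_a$ (or $q^{j}\eta_a$) and one of the form $\eta_b$ (or $q^{j'}\eta_b$) for $a\neq b$, i.e. a relation $\eta_a=q^{j}\eta_b$ with $j\in\{-1,0,1\}$ (or a self-collision $\eta_a=q\eta_a$, excluded since $q\neq\pm1$ in the regimes considered). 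The genericity hypothesis (\ref{E-SOV}) rules out precisely these relations, so distinct sign-strings give distinct eigenvalues. I would then also note that under (\ref{E-SOV}) every zero of every eigenvalue function is simple, so no eigenvalue polynomial has a repeated root, which is what makes $\mathsf{D}(\lambda)$ not merely diagonalizable but simple as a one-parameter family. The symmetric statement for $\mathsf{A}(\lambda)$ follows identically from $A_n(\lambda)=x_+(\lambda)+x_-(\lambda)\sigma_n^z$, or alternatively by the same argument applied to $\mathsf{A}(\lambda)=$ the upper-left entry, whose eigenvalues are $a(\lambda)$-like products with the roles of the two zeros interchanged.

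The main obstacle I anticipate is purely bookkeeping rather than conceptual: correctly normalizing $x_\pm$ and identifying the two roots $\zeta_+,\zeta_-$ of $x_-(\zeta)+kx_+(\zeta)$ so that the condition "$\eta_a=q^j\eta_b$ for $j\in\{-1,0,1\}$" emerges cleanly, including handling the overall monomial prefactor $\lambda^{\pm \mathsf N}$ that each eigenvalue carries (this prefactor is common to all eigenvalues and hence irrelevant to the distinctness argument, but must be accounted for when one says "the eigenvalue is a polynomial of degree $\mathsf N$ in $\lambda^2$ with leading/trailing coefficients fixed"). A second, minor subtlety is that one wants the full statement that in such a representation $\mathsf{D}(\lambda)$ generates, together with a conjugate shift-type operator constructed from $\mathsf{B}(\lambda)$ or $\mathsf{C}(\lambda)$ via the Yang-Baxter relations $\mathsf{D}(\lambda)\mathsf{B}(\mu)\sim \mathsf{B}(\mu)\mathsf{D}(\lambda)$ with a ratio of $R$-matrix weights, a genuine separation of variables; but for the theorem \emph{as stated} — diagonalizability and simplicity of $\mathsf{D}$ (or $\mathsf{A}$) — the explicit computation above suffices, and the operators $\mathsf{B},\mathsf{C}$ enter only in the later sections. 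I would close by remarking that in the self-adjoint regimes of the previous Lemma one moreover gets an orthogonal $\mathsf{D}$-eigenbasis, but that is not needed here.
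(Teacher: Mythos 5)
There is a genuine gap, and it sits at the very first step: $\mathsf{D}(\lambda)$ is \emph{not} diagonal in the $\sigma^{z}$-eigenbasis for $\mathsf{N}\geq 2$. It is the $(2,2)$ entry of the ordered product $\mathsf{L}_{0\mathsf{N}}(\lambda_{\mathsf{N}})\cdots\mathsf{L}_{01}(\lambda_{1})$ in the auxiliary space, hence a sum over all paths returning to the auxiliary state $2$, and every excursion of such a path through the auxiliary state $1$ inserts a pair $C_{n}B_{m}\propto\sigma_{n}^{+}\sigma_{m}^{-}$. Already for $\mathsf{N}=2$ one has $\mathsf{D}(\lambda)=(q-q^{-1})^{2}\,\sigma_{2}^{+}\sigma_{1}^{-}+D_{2}(\lambda_{2})D_{1}(\lambda_{1})$, which acts off-diagonally on the spin basis. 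Consequently your central claim that ``the $\mathsf{D}$-eigenbasis coincides (up to normalization) with the original $\sigma^{z}$-basis'' is false, and with it the argument for diagonalizability collapses: you have not exhibited an eigenbasis at all. This is not a cosmetic issue — the whole point of the SOV construction is that the genuine $\mathsf{D}$-eigenbasis, $\langle h_{1},\dots,h_{\mathsf{N}}|\propto\langle 0|\prod_{n}\mathsf{C}(\eta_{n})^{h_{n}}$, differs from the spin basis by the nontrivial change of basis ($F$-matrix/Drinfel'd twist, cf. Remark 1), which is why the coupling $\langle\mathbf{y}_{j}|\mathbf{y}_{j}\rangle$ in \eqref{M_jj} and Sklyanin's measure are nontrivial. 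The paper instead proves the theorem constructively: it acts with $\mathsf{C}(\eta_{n})$ on the reference state, pushes $\mathsf{D}(\lambda)$ through using the Yang--Baxter relation \eqref{DC-YBC}, kills the unwanted terms with $d(\eta_{n})=0$, and then checks that the resulting $2^{\mathsf{N}}$ states are independent under \eqref{E-SOV}.

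What survives of your argument is the \emph{simplicity} half: the eigenvalue functions really are $d_{\mathbf{h}}(\lambda)=\prod_{n}(\lambda q^{h_{n}}/\eta_{n}-\eta_{n}/\lambda q^{h_{n}})$, each factor vanishing at $\lambda^{2}=\eta_{n}^{2}q^{-2h_{n}}$, and your observation that coincidence of two such Laurent polynomials forces a relation $\eta_{a}=q^{j}\eta_{b}$, $j\in\{-1,0,1\}$ (up to the sign ambiguity in $\lambda^{2}=\eta^{2}$, which is worth stating explicitly) is exactly how \eqref{E-SOV} enters. If you want to salvage the spirit of your approach without the $\mathsf{C}(\eta_{n})$ construction, the correct statement is that $\mathsf{D}(\lambda)$ is \emph{triangular} in the spin basis with respect to a suitable ordering (the off-diagonal path contributions only move spin flips in one direction along the chain), with the products $\prod_{n}(x_{-}(\lambda/\eta_{n})+k_{n}x_{+}(\lambda/\eta_{n}))$ on the diagonal; one must then still argue that a commuting one-parameter family of triangular matrices whose diagonal entries are pairwise distinct \emph{as functions of} $\lambda$ is simultaneously diagonalizable with simple spectrum. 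That argument is legitimate but is missing from your write-up, and it in any case does not produce the explicit eigenvectors \eqref{D-left-eigenstates}--\eqref{D-right-eigenstates} on which the rest of the paper (actions of $\mathsf{B}$, $\mathsf{C}$, scalar products, form factors) depends.
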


The theorem follows by the following explicit construction\footnote{%
For completeness the construction of the SOV representation w.r.t. $\mathsf{A%
}(\lambda )$ is given in appendix.} of $\mathsf{D}(\lambda )$-eigenbasis.
Let us define the left and right \textit{references states}:%
\begin{equation}
\langle 0|\equiv \otimes _{n=1}^{\mathsf{N}}\langle 1,n|\text{ \ \ \ and \ \ 
}|0\rangle \equiv \otimes _{n=1}^{\mathsf{N}}|1,n\rangle ,
\end{equation}%
then:

\begin{theorem}\label{Th-D-SOV}
$\mathsf{I)}$ \underline{Left $\mathsf{D}(\lambda )$ SOV-representations:} \
Under the condition $\left( \ref{E-SOV}\right) $, the states:%
\begin{equation}
\langle h_{1},...,h_{\mathsf{N}}|\equiv \frac{1}{\text{\textsc{n}}}\langle
0|\prod_{n=1}^{\mathsf{N}}\left( \frac{\mathsf{C}(\eta _{n})}{d(\eta _{n}/q)}%
\right) ^{h_{n}},  \label{D-left-eigenstates}
\end{equation}%
where%
\begin{equation}
\text{\textsc{n}}=\prod_{1\leq b<a\leq \mathsf{N}}(\eta _{a}/\eta _{b}-\eta
_{b}/\eta _{a})^{1/2}  \label{Norm-def}
\end{equation}%
\ $h_{n}\in \{0,1\},$ $n\in \{1,...,\mathsf{N}\}$, define a $\mathsf{D}%
(\lambda )$-eigenbasis of $\mathcal{L}_{\mathsf{N}}$:%
\begin{equation}
\langle h_{1},...,h_{\mathsf{N}}|\mathsf{D}(\lambda )=d_{\text{\textbf{h}}%
}(\lambda )\langle h_{1},...,h_{\mathsf{N}}|,  \label{D-L-EigenV}
\end{equation}%
where:%
\begin{equation}
d_{\text{\textbf{h}}}(\lambda )\equiv \prod_{n=1}^{\mathsf{N}}(\frac{\lambda
q^{h_{n}}}{\eta _{n}}-\frac{\eta _{n}}{\lambda q^{h_{n}}})\text{ \ \ \ and \
\ \textbf{h}}\equiv (h_{1},...,h_{\mathsf{N}}).  \label{EigenValue-D}
\end{equation}%
The action of the remaining Yang-Baxter generators on the generic state $%
\langle h_{1},...,h_{\mathsf{N}}|$ reads:%
\begin{eqnarray}
\langle h_{1},...,h_{\mathsf{N}}|\mathsf{C}(\lambda ) &=&\sum_{a=1}^{\mathsf{N}}\prod_{b\neq a}\frac{\lambda
q^{h_{b}}/\eta _{b}-\eta _{b}/q^{h_{b}}\lambda }{\eta
_{a}q^{(h_{b}-h_{a})}/\eta _{b}-\eta _{b}/q^{(h_{b}-h_{a})}\eta _{a}}d(\eta
_{a}q^{h_{a}-1})\langle h_{1},...,h_{\mathsf{N}}|\text{T}_{a}^{+},  \label{C-SOV_D-left} \\
&&  \notag \\
\langle h_{1},...,h_{\mathsf{N}}|\mathsf{B}(\lambda ) &=&\sum_{a=1}^{\mathsf{N}}\prod_{b\neq a}\frac{\lambda
q^{h_{b}}/\eta _{b}-\eta _{b}/q^{h_{b}}\lambda }{\eta
_{a}q^{(h_{b}-h_{a})}/\eta _{b}-\eta _{b}/q^{(h_{b}-h_{a})}\eta _{a}}a(\eta
_{a}q^{h_{a}-1})\langle h_{1},...,h_{\mathsf{N}}|\text{T}_{a}^{-},  \label{B-SOV_D-left}
\end{eqnarray}%
where:%
\begin{equation}
\langle h_{1},...,h_{a},...,h_{\mathsf{N}}|\text{T}_{a}^{\pm }=\langle
h_{1},...,h_{a}\pm 1,...,h_{\mathsf{N}}|.
\end{equation}%
Finally, $\mathsf{A}(\lambda )$ is uniquely defined by the quantum
determinant relation.\smallskip

$\mathsf{II)}$  \underline{Right $\mathsf{D}(\lambda )$ SOV-representations:} \
Under the condition $\left( \ref{E-SOV}\right) $, the states:%
\begin{equation}
|h_{1},...,h_{\mathsf{N}}\rangle \equiv \frac{1}{\text{\textsc{n}}}%
\prod_{n=1}^{\mathsf{N}}\left( \frac{\mathsf{B}(\eta _{n})}{a(\eta _{n})}%
\right) ^{h_{n}}|0\rangle ,  \label{D-right-eigenstates}
\end{equation}%
where\ $h_{n}\in \{0,1\},$ $n\in \{1,...,\mathsf{N}\}$, define a $\mathsf{D}%
(\lambda )$-eigenbasis of $\mathcal{R}_{\mathsf{N}}$:%
\begin{equation}
\mathsf{D}(\lambda )|h_{1},...,h_{\mathsf{N}}\rangle =d_{\text{\textbf{h}}%
}(\lambda )|h_{1},...,h_{\mathsf{N}}\rangle .  \label{D-R-EigenV}
\end{equation}%
The action of the remaining Yang-Baxter generators on the generic state $%
|h_{1},...,h_{\mathsf{N}}\rangle $ reads:%
\begin{eqnarray}
\mathsf{C}(\lambda ) |h_{1},...,h_{\mathsf{N}}\rangle&=&\sum_{a=1}^{\mathsf{N}}\text{T}_{a}^{-}|h_{1},...,h_{\mathsf{N}}\rangle\prod_{b\neq a}\frac{\lambda
q^{h_{b}}/\eta _{b}-\eta _{b}/q^{h_{b}}\lambda }{\eta
_{a}q^{(h_{b}-h_{a})}/\eta _{b}-\eta _{b}/q^{(h_{b}-h_{a})}\eta _{a}}d(\eta
_{a}q^{-h_{a}}),  \label{C-SOV_D-right} \\
&&  \notag \\
\mathsf{B}(\lambda ) |h_{1},...,h_{\mathsf{N}}\rangle&=&\sum_{a=1}^{\mathsf{N}}\text{T}_{a}^{+}|h_{1},...,h_{\mathsf{N}}\rangle\prod_{b\neq a}\frac{\lambda
q^{h_{b}}/\eta _{b}-\eta _{b}/q^{h_{b}}\lambda }{\eta
_{a}q^{(h_{b}-h_{a})}/\eta _{b}-\eta _{b}/q^{(h_{b}-h_{a})}\eta _{a}}a(\eta
_{a}q^{-h_{a}}).  \label{B-SOV_D-right}
\end{eqnarray}%
where:%
\begin{equation}
\text{T}_{a}^{\pm }|h_{1},...,h_{a},...,h_{\mathsf{N}}\rangle
=|h_{1},...,h_{a}\pm 1,...,h_{\mathsf{N}}\rangle .
\end{equation}%
Finally, $\mathsf{A}(\lambda )$ is uniquely defined by the quantum
determinant relation.
\end{theorem}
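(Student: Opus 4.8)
The plan is to construct explicitly the left $\mathsf{D}(\lambda)$-eigenbasis and verify the claimed action formulas; the right case then follows by a parallel construction (or by the duality relation $\mathsf{M}(\lambda)^{\dagger}$-type argument, but here it is cleanest to repeat the computation). First I would record the elementary single-site facts: on the local reference covector $\langle 1,n|$ one has $\langle 1,n|A_{n}(\lambda)=x_{+}(\lambda)+x_{-}(\lambda)$, $\langle 1,n|D_{n}(\lambda)=x_{-}(\lambda)+x_{+}(\lambda)$ acting diagonally up to the constant, $\langle 1,n|C_{n}=0$, and $\langle 1,n|B_{n}=(q-q^{-1})\langle -1,n|$, while $\langle -1,n|C_{n}=(q-q^{-1})\langle 1,n|$. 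Feeding these into the product form of $\mathsf{M}_{0}(\lambda)=\mathsf{L}_{0\mathsf{N}}(\lambda_{\mathsf{N}})\cdots\mathsf{L}_{01}(\lambda_{1})$ shows that $\langle 0|$ is a common eigencovector of $\mathsf{D}(\lambda)$ with eigenvalue $d_{\mathbf{0}}(\lambda)=d(\lambda)=\prod_{n}(\lambda/\eta_{n}-\eta_{n}/\lambda)$, and that $\langle 0|\mathsf{C}(\lambda)$ is a polynomial in $\lambda$ (in the variable $\Lambda=\lambda^{2}$, of the appropriate degree) whose zeros at $\lambda=\eta_{n}$ give, up to the normalization $d(\eta_{n}/q)$, precisely the shift covectors $\langle 0|\mathsf{C}(\eta_{n})/d(\eta_{n}/q)$.

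Next I would exploit the Yang-Baxter commutation relations derived from \eqref{YBE0} — in particular $\mathsf{D}(\lambda)\mathsf{C}(\mu)=f(\lambda,\mu)\mathsf{C}(\mu)\mathsf{D}(\lambda)+g(\lambda,\mu)\mathsf{C}(\lambda)\mathsf{D}(\mu)$ with the standard rational-in-$\lambda/\mu$ coefficients $f,g$, together with $[\mathsf{C}(\lambda),\mathsf{C}(\mu)]=0$ — to compute $\langle h_{1},\dots,h_{\mathsf{N}}|\mathsf{D}(\lambda)$ by moving $\mathsf{D}(\lambda)$ leftwards through the string of $\mathsf{C}(\eta_{n})$'s. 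The $g$-terms telescope because $\langle 0|\mathsf{D}(\eta_{n})$ contributes a factor $d(\eta_{n})$ which vanishes (it has a zero at $\lambda=\eta_{n}$), so only the diagonal $f$-terms survive; evaluating the product of the $f(\lambda,\eta_{n})$ with the initial eigenvalue reproduces $d_{\mathbf{h}}(\lambda)=\prod_{n}(\lambda q^{h_{n}}/\eta_{n}-\eta_{n}/\lambda q^{h_{n}})$. The key point in this step — and the one that needs condition \eqref{E-SOV} — is that the $2^{\mathsf{N}}$ eigenvalues $d_{\mathbf{h}}$ are pairwise distinct: this is exactly the statement that the $2\mathsf{N}$ numbers $\{\eta_{n}q^{\pm}\}$ separate appropriately, which the hypothesis $\eta_{a}\neq q^{j}\eta_{b}$ for $j\in\{-1,0,1\}$ guarantees. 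Distinctness of eigenvalues forces the $2^{\mathsf{N}}$ covectors to be linearly independent, hence a basis of $\mathcal{L}_{\mathsf{N}}$, establishing diagonalizability and simplicity of $\mathsf{D}(\lambda)$ and thereby Theorem~3.1.

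For the action of $\mathsf{C}(\lambda)$ and $\mathsf{B}(\lambda)$, I would use the interpolation principle: since $\langle h_{1},\dots,h_{\mathsf{N}}|\mathsf{C}(\lambda)$ is (after extracting the trivial scalar prefactor) a polynomial of degree $\mathsf{N}-1$ in $\Lambda=\lambda^{2}$, it is determined by its values at the $\mathsf{N}$ points $\lambda=\eta_{a}q^{h_{a}}$. At $\lambda=\eta_{a}q^{h_{a}}$ one again pushes $\mathsf{C}(\lambda)$ through the $\mathsf{C}$-string using $[\mathsf{C},\mathsf{C}]=0$ and the commutation of $\mathsf{C}(\eta_{a}q^{h_{a}})$ with $\mathsf{D}$; the relevant matrix element is pinned down by acting on $\langle 0|$, where $\langle 0|\mathsf{C}(\eta_{a}q^{h_{a}})$ is either zero (if $h_{a}=1$, giving the $\text{T}_{a}^{+}$ raising) or produces the next basis covector, with the coefficient $d(\eta_{a}q^{h_{a}-1})$ emerging from the quantum-determinant normalization of the reference state. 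Assembling the Lagrange interpolation polynomial through these data yields exactly \eqref{C-SOV_D-left}, and the identical argument with $\mathsf{B}$ in place of $\mathsf{C}$ — now $\langle -1,n|B_{n}=0$ so the lowering operator $\text{T}_{a}^{-}$ appears, and the coefficient $a(\eta_{a}q^{h_{a}-1})$ replaces $d(\eta_{a}q^{h_{a}-1})$ by the analogous computation — gives \eqref{B-SOV_D-left}. Finally $\mathsf{A}(\lambda)$ is read off from the central quantum-determinant relation \eqref{q-det-f}, $\mathsf{A}(\lambda)=(\det\mathsf{M}(\lambda)+\mathsf{B}(\lambda)\mathsf{C}(\lambda/q))\mathsf{D}(\lambda/q)^{-1}$, which is well defined on the eigenbasis since $d_{\mathbf{h}}(\lambda/q)$ is nonzero away from its finitely many zeros and the combination extends polynomially. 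The main obstacle is the bookkeeping in the telescoping sum for the $\mathsf{C}$/$\mathsf{B}$-action: one must track carefully how the $g$-coefficients and the successive shifts $h_{b}\mapsto h_{b}\pm\delta_{ab}$ combine so that the product $\prod_{b\neq a}$ in \eqref{C-SOV_D-left} comes out with precisely the shifted arguments $q^{h_{b}-h_{a}}$ in the denominator; choosing the normalization \textsc{n} in \eqref{Norm-def} is what makes these products symmetric and the formulas clean, and verifying that this \textsc{n} does the job is the one genuinely fiddly computation.
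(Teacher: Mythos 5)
Your proposal follows essentially the same route as the paper's proof: the reference state $\langle 0|$ as a common $\mathsf{D}$-eigencovector, commuting $\mathsf{D}(\lambda )$ through the string of $\mathsf{C}(\eta _{n})$'s with the unwanted terms killed by $d(\eta _{n})=0$, linear independence from the distinctness of the eigenvalues $d_{\text{\textbf{h}}}$ under \rf{E-SOV}, Lagrange interpolation of the degree-$(\mathsf{N}-1)$ Laurent polynomials $\mathsf{B}(\lambda ),\mathsf{C}(\lambda )$ through their values at the separate variables, and the quantum determinant to fix $\mathsf{A}(\lambda )$. The only caveat is a notational slip: the interpolation points are $\eta _{a}q^{-h_{a}}$ (the zeros of $d_{\text{\textbf{h}}}$), not $\eta _{a}q^{h_{a}}$, and the vanishing of the raising term at $h_{a}=1$ comes from the coefficient $d(\eta _{a}q^{h_{a}-1})=d(\eta _{a})=0$ rather than from $\langle 0|\mathsf{C}$ being zero.
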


\begin{proof}
The proof of the theorem is based on Yang-Baxter commutation relations and
on the fact that the left and right references states are $\mathsf{D}(\lambda )$%
-eigenstates:%
\begin{equation}
\langle 0|\mathsf{A}(\lambda )=a(\lambda )\langle 0|,\text{ \ \ \ }\langle 0|%
\mathsf{D}(\lambda )=d(\lambda )\langle 0|,\text{ \ \ \ }\langle 0|\mathsf{B}%
(\lambda )=\text{\b{0}},\text{ \ \ \ }\langle 0|\mathsf{C}(\lambda )\neq 
\text{\b{0}},  \label{L_ref-E}
\end{equation}

and%
\begin{equation}
\mathsf{A}(\lambda )|0\rangle =a(\lambda )|0\rangle ,\text{ \ \ \ }\mathsf{D}%
(\lambda )|0\rangle =d(\lambda )|0\rangle ,\text{ \ \ \ }\mathsf{C}(\lambda
)|0\rangle =\text{\b{0}},\text{ \ \ \ }\mathsf{B}(\lambda )|0\rangle \neq 
\text{\b{0}}.
\end{equation}%
Indeed, to prove that $\left( \ref{D-left-eigenstates}\right) $ and $\left( %
\ref{D-right-eigenstates}\right) $ are left and right eigenstates of $%
\mathsf{D}(\lambda )$ with the eigenvalues $\left( \ref{EigenValue-D}\right) 
$, we have just to repeat the standard computations in algebraic Bethe
ansatz \cite{F95}. Let us see explicitly the left case:%
\begin{align}
\langle h_{1},...,h_{\mathsf{N}}|\mathsf{D}(\lambda )& =\frac{d(\lambda )}{%
\text{\textsc{n}}}\prod_{n=1}^{\mathsf{N}}\left( \frac{\lambda q/\eta
_{n}-\eta _{n}/q\lambda }{\lambda /\eta _{n}-\eta _{n}/\lambda }\right)
^{h_{n}}\langle h_{1},...,h_{\mathsf{N}}|  \notag \\
& +\sum_{n=1}^{\mathsf{N}}\left[ \frac{d(\eta _{n})}{\text{\textsc{n}}}\frac{%
\delta _{h_{n},1}(q-1/q)}{\lambda /\eta _{a}-\eta _{a}/\lambda }\prod_{a\neq
n}\left( \frac{\lambda q/\eta _{a}-\eta _{a}/q\lambda }{\lambda /\eta
_{a}-\eta _{a}/\lambda }\right) ^{h_{n}}\right.  \notag \\
& \times \left. \langle 0|\prod_{a\neq n}\left( \frac{\mathsf{C}(\eta _{a})}{%
d(\eta _{a}/q)}\right) ^{h_{a}}\frac{\mathsf{C}(\lambda )}{d(\eta _{n}/q)}%
\right] ,  \label{Action1}
\end{align}%
where we have used the Yang-Baxter commutation relation:%
\begin{equation}
\mathsf{C}(\mu )\mathsf{D}(\lambda )=\frac{\lambda q/\mu -\mu /q\lambda }{%
\lambda /\mu -\mu /\lambda }\mathsf{D}(\lambda )\mathsf{C}(\mu )+\frac{q-1/q%
}{\lambda /\mu -\mu /\lambda }\mathsf{D}(\mu )\mathsf{C}(\lambda ).
\label{DC-YBC}
\end{equation}%
In particular, the first term on the r.h.s. of $\left( \ref{Action1}\right) $
is generated by using the first term on the r.h.s. of $\left( \ref{DC-YBC}%
\right) $ to commute the operator $\mathsf{D}(\lambda )$ with all the $%
\mathsf{C}(\eta _{a})$ in $\left( \ref{D-left-eigenstates}\right) $
and finally using $\left( \ref{L_ref-E}\right) $ to act with $\mathsf{D}%
(\lambda )$ on the left reference state. The generic term $n$ in the sum of $%
\left( \ref{Action1}\right) $ is obtained by using the commutativity of the $%
\mathsf{C}(\mu )$ to write:%
\begin{equation}
\langle h_{1},...,h_{\mathsf{N}}|=\frac{1}{\text{\textsc{n}}}\langle
0|\left( \frac{\mathsf{C}(\eta _{n})}{d(\eta _{n}/q)}\right)
^{h_{n}}\prod_{a\neq n}\left( \frac{\mathsf{C}(\eta _{a})}{d(\eta _{a}/q)}%
\right) ^{h_{a}},
\end{equation}%
and then commuting $\mathsf{D}(\lambda )$ with all the $\mathsf{C}(\eta
_{a\neq n})$ by using the first term in the r.h.s. of $\left( \ref{DC-YBC}%
\right) $ while for $\mathsf{D}(\lambda )$ and $\mathsf{C}(\eta _{n})$ it is
used the second term in $\left( \ref{DC-YBC}\right) $. The result $\left( %
\ref{D-L-EigenV}\right) $ follows being $d(\eta _{n})=0$.

Under the condition $\left( \ref{E-SOV}\right) $ the states $\langle
h_{1},...,h_{\mathsf{N}}|$ form a set of 2$^{\mathsf{N}}$ independent states
and so they are a $\mathsf{D}(\lambda )$-eigenbasis of the representation.

The action of $\mathsf{B}(\eta _{n}/q^{h_{n}})$ and $\mathsf{C}(\eta
_{n}/q^{h_{n}})$ on the left and right states $\left( \ref%
{D-left-eigenstates}\right) $ and $\left( \ref{D-right-eigenstates}\right) $
follows by imposing the Yang-Baxter commutation relations and the quantum
determinant relations. Then the left $\left( \ref{C-SOV_D-left}\right) $-$%
\left( \ref{B-SOV_D-left}\right) $ and right $\left( \ref{C-SOV_D-right}%
\right) $-$\left( \ref{B-SOV_D-right}\right) $ representations of $\mathsf{B}%
(\lambda )$ and $\mathsf{C}(\lambda )$ are just interpolation formulae which
take into account that they are Laurent polynomials of degree $\mathsf{N}-1 $, respectively even or odd for $\mathsf{N}$ odd or even.
\end{proof}

\textbf{Remark 1.} It is worth remarking that representations of the type $%
\left( \ref{C-SOV_D-left}\right) $-$\left( \ref{B-SOV_D-left}\right) $ and
$\left( \ref{C-SOV_D-right}\right) $-$\left( \ref{B-SOV_D-right}%
\right) $ for the generators of the 6-vertex Yang-Baxter algebra can be also
derived from the original representations by implementing the change of
basis prescribed from the factorizing $F$-matrices \cite{MaiS00}. Let us
recall that these $F$-matrices were introduced to provide explicit
representations of the Drinfel'd's twist of quasi-triangular quasi-Hopf
algebras \cite{Dr1}-\cite{Dr3}. The connection with Sklyanin's quantum
separation of variables in its functional version was recognized in \cite%
{T99} and there used to construct the factorizing $F$-matrices for general
Yangian $Y(sl(2))$; i.e. the rational 6-vertex Yang-Baxter algebra
associated to the general spin s quantum chain representations. These
results were used by the Lyon group in \cite{KitMT99} mainly as tools to get
the solution of the quantum inverse problem and to re-derive the Slavnov's scalar
product formula \cite{Slav89,Slav97} for the periodic spin-1/2 $XXZ$ quantum chain but not to solve
the corresponding spectral problem. This is natural as these representations
do not define quantum separate variables representations for the spectral
problem associated to the transfer matrix of the periodic chain.

\subsection{Sklyanin's measure and scalar products}

In the next subsection, we compute the coupling between states
belonging to right and left SOV-basis. We show that up to an overall
constant these are completely fixed by the left and right
SOV-representations of the Yang-Baxter algebras when the gauge in the
SOV-representations are chosen. Then, we use these results to compute the
scalar products between states which in the left and right SOV-basis have a 
\textit{separated form} similar to that of the transfer matrix eigenstates.
The resulting scalar product formula admits a determinant representation
which can be considered as the SOV analogous of the Slavnov's scalar product
formula computed for Bethe states in the
framework of the algebraic Bethe ansatz.

\subsubsection{Coupling of left and right SOV-basis}
It may be helpful to present the main properties of the matrices which
define the change of basis from the original basis to the SOV-basis; this
will lead us to introduce naturally the coupling between pairs of states
belonging to left and right SOV-basis and the concept of {\it Sklyanin's measure} in our model. Let us define the following isomorphism: 
\begin{equation}
\varkappa :\left( h_{1},...,h_{\mathsf{N}}\right) \in \{0,1\}^{\mathsf{N}%
}\rightarrow j=\varkappa \left( h_{1},...,h_{\mathsf{N}}\right) \equiv
1+\sum_{a=1}^{\mathsf{N}}2^{(a-1)}h_{a}\in \{1,...,2^{\mathsf{N}}\},
\label{corrisp}
\end{equation}%
then we can write:%
\begin{equation}
\langle \text{\textbf{y}}_{j}|=\langle \text{\textbf{x}}_{j}|U^{(L)}=%
\sum_{i=1}^{2^{\mathsf{N}}}U_{j,i}^{(L)}\langle \text{\textbf{x}}_{i}|\text{
\ \ and\ \ \ }|\text{\textbf{y}}_{j}\rangle =U^{(R)}|\text{\textbf{x}}%
_{j}\rangle =\sum_{i=1}^{2^{\mathsf{N}}}U_{i,j}^{(R)}|\text{\textbf{x}}%
_{i}\rangle ,
\end{equation}%
where we have used the notations:%
\begin{equation}
\langle \text{\textbf{y}}_{j}|\equiv \langle h_{1},...,h_{\mathsf{N}}|\text{
\ and \ }|\text{\textbf{y}}_{j}\rangle \equiv |h_{1},...,h_{\mathsf{N}%
}\rangle ,
\end{equation}%
to represent, respectively, the states of the left and right SOV-basis and:%
\begin{equation}
\langle \text{\textbf{x}}_{j}|\equiv \otimes _{n=1}^{\mathsf{N}}\langle
2h_{n}-1,n|\text{ \ \ \ \ and \ \ \ }|\text{\textbf{x}}_{j}\rangle \equiv
\otimes _{n=1}^{\mathsf{N}}|2h_{n}-1,n\rangle ,
\end{equation}%
to represent, respectively, the states of the left and right original $%
\sigma _{n}^{z}$-orthonormal basis. Here, $U^{(L)}$ and $U^{(R)}$ are the $%
2^{\mathsf{N}}\times 2^{\mathsf{N}}$ matrices for which it holds:%
\begin{equation}
U^{(L)}\mathsf{D}(\lambda )=\Delta _{\mathsf{D}}(\lambda )U^{(L)},\text{ \ \ 
}\mathsf{D}(\lambda )U^{(R)}=U^{(R)}\Delta _{\mathsf{D}}(\lambda ),
\end{equation}%
where $\Delta _{\mathsf{D}}(\lambda )$ is a diagonal $2^{\mathsf{N}}\times
2^{\mathsf{N}}$ matrix. The diagonalizability and simplicity of the $\mathsf{%
D}$-spectrum imply the invertibility of the matrices $U^{(L)}$ and $U^{(R)}$%
\ and the fact that all the diagonal entry of $\Delta _{\mathsf{D}}(\lambda
) $ are Laurent polynomials\ in $\lambda $ with different zeros. Then the
following proposition holds:

\begin{proposition}
The $2^{\mathsf{N}}\times 2^{\mathsf{N}}$ matrix:%
\begin{equation}
M\equiv U^{(L)}U^{(R)}
\end{equation}%
is diagonal and it is characterized by:%
\begin{equation}
M_{jj}=\langle \text{\textbf{y}}_{j}|\text{\textbf{y}}_{j}\rangle =\langle
h_{1},...,h_{\mathsf{N}}|h_{1},...,h_{\mathsf{N}}\rangle =\prod_{1\leq
b<a\leq \mathsf{N}}\frac{1}{\eta _{a}q^{(h_{b}-h_{a})}/\eta _{b}-\eta
_{b}/q^{(h_{b}-h_{a})}\eta _{a}}.  \label{M_jj}
\end{equation}
\end{proposition}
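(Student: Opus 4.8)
The plan is to establish the two assertions of the Proposition separately: first that $M = U^{(L)}U^{(R)}$ is diagonal, and then to compute the diagonal entries. For the diagonality, the key observation is that both $U^{(L)}$ and $U^{(R)}$ intertwine $\mathsf{D}(\lambda)$ with the same diagonal matrix $\Delta_{\mathsf{D}}(\lambda)$: from $U^{(L)}\mathsf{D}(\lambda) = \Delta_{\mathsf{D}}(\lambda)U^{(L)}$ and $\mathsf{D}(\lambda)U^{(R)} = U^{(R)}\Delta_{\mathsf{D}}(\lambda)$ we get $M\,\Delta_{\mathsf{D}}(\lambda) = U^{(L)}\mathsf{D}(\lambda)U^{(R)} = \Delta_{\mathsf{D}}(\lambda)\,M$, so $M$ commutes with $\Delta_{\mathsf{D}}(\lambda)$ for every $\lambda$. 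Since the diagonal entries of $\Delta_{\mathsf{D}}(\lambda)$ are the Laurent polynomials $d_{\mathbf{h}}(\lambda)$ of (\ref{EigenValue-D}), which by the simplicity part of Theorem \ref{Th-D-SOV} have pairwise distinct zeros and hence are pairwise distinct as functions of $\lambda$, the only matrices commuting with the whole family $\{\Delta_{\mathsf{D}}(\lambda)\}_\lambda$ are diagonal; this forces $M$ diagonal. Equivalently and more transparently, $M_{jj} = \langle\text{\textbf{y}}_j|\text{\textbf{y}}_j\rangle$ and $M_{ji} = \langle\text{\textbf{y}}_j|\text{\textbf{y}}_i\rangle = 0$ for $i\neq j$ because the covector $\langle\text{\textbf{y}}_j|$ and the vector $|\text{\textbf{y}}_i\rangle$ are $\mathsf{D}(\lambda)$-eigenstates (left and right, respectively) with different eigenvalues, so $d_{\mathbf{h}_j}(\lambda)\langle\text{\textbf{y}}_j|\text{\textbf{y}}_i\rangle = \langle\text{\textbf{y}}_j|\mathsf{D}(\lambda)|\text{\textbf{y}}_i\rangle = d_{\mathbf{h}_i}(\lambda)\langle\text{\textbf{y}}_j|\text{\textbf{y}}_i\rangle$ vanishes.

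For the evaluation of $M_{jj} = \langle h_1,\dots,h_{\mathsf{N}}|h_1,\dots,h_{\mathsf{N}}\rangle$, the strategy is to compute it recursively by moving one $\mathsf{B}$ or $\mathsf{C}$ at a time across the pairing, using the explicit SOV action formulas of Theorem \ref{Th-D-SOV} together with the reference-state relations (\ref{L_ref-E}). Concretely, write $\langle h_1,\dots,h_{\mathsf{N}}|$ using the definition (\ref{D-left-eigenstates}) and let the rightmost factor $\mathsf{C}(\eta_n)/d(\eta_n/q)$ (for some $n$ with $h_n=1$) act to the right on $|h_1,\dots,h_{\mathsf{N}}\rangle$ via (\ref{C-SOV_D-right}). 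Only the term with $\mathrm{T}_n^-$ survives when we subsequently contract against the reference vector (all other summands produce states with some $h_b=2$, which are orthogonal to the left reference state once all the remaining $\mathsf{C}$'s have been brought over); this lowers $h_n$ to $0$ and produces an explicit scalar factor coming from the product $\prod_{b\neq n}$ in (\ref{C-SOV_D-right}) together with $d(\eta_n q^{-h_n})$ and the normalizing $d(\eta_n/q)$. Iterating over all $n$ with $h_n = 1$ reduces the pairing to $\langle 0|0\rangle = 1$ times a product of such scalar factors.

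The main obstacle — and the step requiring genuine care rather than routine manipulation — is the bookkeeping of the accumulated product of rational prefactors: one must check that the factors $d(\eta_n/q)$ in the denominators of the left and right state definitions, the factors $a(\eta_n)$ (these will enter when one alternatively processes a $\mathsf{B}$), the values $d(\eta_a q^{-h_a})$ appearing in (\ref{C-SOV_D-right}), and the interpolation products $\prod_{b\neq a}$ conspire to telescope into exactly $\prod_{1\le b<a\le\mathsf{N}}(\eta_a q^{(h_b-h_a)}/\eta_b - \eta_b/q^{(h_b-h_a)}\eta_a)^{-1}$, with the explicit normalization constant $\textsc{n}^2$ from (\ref{Norm-def}) cancelling the residual $\prod_{b<a}(\eta_a/\eta_b-\eta_b/\eta_a)$. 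A clean way to organize this is to first prove the formula by induction on $\mathsf{N}$ — or, alternatively, to first establish it for the all-ones configuration $\mathbf{h}=(1,\dots,1)$ and then relate a general $\mathbf{h}$ to that case — and to use the evaluations $a(\eta_a q^{\pm 1})$, $d(\eta_a q^{\pm 1})$ together with $d(\eta_a)=0$, $a(\eta_a/q)\neq 0$ that follow from the factorized forms of $a$ and $d$. The condition (\ref{E-SOV}) guarantees all the denominators that appear are nonzero, so the recursion is well defined at every step.
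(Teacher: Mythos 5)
Your diagonality argument is sound --- both versions of it --- and the second one (orthogonality of left and right $\mathsf{D}(\lambda)$-eigenstates with distinct eigenvalues) is exactly the paper's. The gap is in the evaluation of the diagonal entries. You act with the rightmost $\mathsf{C}(\eta _{n})$ on $|h_{1},...,h_{\mathsf{N}}\rangle $ via $\left( \ref{C-SOV_D-right}\right) $ and claim that only the $\text{T}_{n}^{-}$ term survives because the other summands ``produce states with some $h_{b}=2$''. This is false: the right action of $\mathsf{C}$ contains only lowering operators $\text{T}_{a}^{-}$, so no entry ever reaches $2$; the terms that drop out are those with $h_{a}=0$ (killed by $d(\eta _{a}q^{-h_{a}})=d(\eta _{a})=0$), while every $a\neq n$ with $h_{a}=1$ contributes a legitimate state $\text{T}_{a}^{-}|h_{1},...,h_{\mathsf{N}}\rangle $ with nonvanishing coefficient proportional to $d(\eta _{a}/q)$. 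In your framing --- contracting against the reference covector $\langle 0|$ only after all the $\mathsf{C}$'s have been moved across --- these cross terms genuinely survive and proliferate; that is precisely why you then face the ``bookkeeping'' problem that you flag as the main obstacle and do not resolve. As written, the argument contains a false cancellation and leaves the actual telescoping unverified.

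The repair is to contract one step at a time against an SOV covector rather than the reference covector, invoking the orthogonality you already proved. The paper evaluates the single matrix element $\theta _{a}=\langle h_{1},...,h_{a}=0,...,h_{\mathsf{N}}|\mathsf{C}(\eta _{a})|h_{1},...,h_{a}=1,...,h_{\mathsf{N}}\rangle $ in two ways: acting to the left, only the diagonal term of $\left( \ref{C-SOV_D-left}\right) $ survives (for the other terms the interpolation numerator contains the factor $q^{h_{a}^{\prime }}-q^{-h_{a}^{\prime }}$ with $h_{a}^{\prime }=0$), giving $d(\eta _{a}/q)$ times the norm at $h_{a}=1$; acting to the right, orthogonality kills every term of $\left( \ref{C-SOV_D-right}\right) $ except the one matching the covector, giving an explicit ratio times the norm at $h_{a}=0$. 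Equating the two yields the one-step recursion $\left( \ref{F1}\right) $, whose iteration over the sites with $h_{a}=1$ telescopes immediately into $\left( \ref{F2}\right) $, with $\text{\textsc{n}}^{2}$ from $\left( \ref{Norm-def}\right) $ cancelling the residual product $\prod_{b<a}(\eta _{a}/\eta _{b}-\eta _{b}/\eta _{a})$ and condition $\left( \ref{E-SOV}\right) $ guaranteeing all denominators are nonzero. With this organization the bookkeeping you were worried about disappears entirely.
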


\begin{proof}
The fact that the matrix $M$ is diagonal is a trivial consequence of the
{\it orthogonality} of left and right eigenstates corresponding to different
eigenvalue of $\mathsf{D}(\lambda )$.

Let us compute the matrix element $\theta _{a}\equiv \langle
h_{1},...,h_{a}=0,...,h_{\mathsf{N}}|\mathsf{C}(\eta
_{a})|h_{1},...,h_{a}=1,...,h_{\mathsf{N}}\rangle $, where $a\in \{1,...,%
\mathsf{N}\}$. Then using the left action of the operator $\mathsf{C}(\eta
_{a})$ we get:%
\begin{equation}
\theta _{a}=d(\eta _{a}/q)\langle h_{1},...,h_{a}=1,...,h_{\mathsf{N}%
}|h_{1},...,h_{a}=1,...,h_{\mathsf{N}}\rangle,
\end{equation}%
while using the right action of the operator $\mathsf{C}(\eta _{a})$ and the
orthogonality of right and left $\mathsf{D}$-eigenstates corresponding to different
eigenvalues we get:%
\begin{equation}
\theta _{a}=\prod_{b\neq a,b=1}^{\mathsf{N}}\frac{(\eta _{a}q^{h_{b}}/\eta
_{b}-\eta _{b}/q^{h_{b}}\eta _{a})}{(\eta _{a}q^{h_{b}-1}/\eta _{b}-\eta
_{b}/q^{h_{b}-1}\eta _{a})}d(\eta _{a}/q)\langle h_{1},...,h_{a}=0,...,h_{%
\mathsf{N}}|h_{1},...,h_{a}=0,...,h_{\mathsf{N}}\rangle 
\end{equation}%
and so:%
\begin{equation}
\frac{\langle h_{1},...,h_{a}=1,...,h_{\mathsf{N}}|h_{1},...,h_{a}=1,...,h_{%
\mathsf{N}}\rangle }{\langle h_{1},...,h_{a}=0,...,h_{\mathsf{N}%
}|h_{1},...,h_{a}=0,...,h_{\mathsf{N}}\rangle }=\prod_{b\neq a,b=1}^{\mathsf{%
N}}\frac{(\eta _{a}q^{h_{b}}/\eta _{b}-\eta _{b}/q^{h_{b}}\eta _{a})}{(\eta
_{a}q^{h_{b}-1}/\eta _{b}-\eta _{b}/q^{h_{b}-1}\eta _{a})}\,.  \label{F1}
\end{equation}%
The previous formula implies:%
\begin{equation}
\frac{\langle h_{1},...,h_{\mathsf{N}}|h_{1},...,h_{\mathsf{N}}\rangle }{%
\langle 0|0\rangle /\text{\textsc{n}}^{2}}=\prod_{1\leq b<a\leq \mathsf{N}}%
\frac{\eta _{a}/\eta _{b}-\eta _{b}/\eta _{a}}{\eta
_{a}q^{(h_{b}-h_{a})}/\eta _{b}-\eta _{b}/q^{(h_{b}-h_{a})}\eta _{a}},
\label{F2}
\end{equation}%
from which the proposition follows recalling the definition $\left( \ref%
{Norm-def}\right) $ of the normalization \textsc{n}\ and remarking that in
our choice of the original spin basis it holds:%
\begin{equation}
\langle 0|0\rangle =1.
\end{equation}
\end{proof}

\subsubsection{SOV-decomposition of the identity}

The previous results allow to write the following spectral decomposition of
the identity $\mathbb{I}\in\text{End}(\mathcal{R}_{\mathsf{N}})$:%
\begin{equation}
\mathbb{I}\equiv \sum_{i=1}^{2^{\mathsf{N}}}\mu _{i}|\text{\textbf{y}}%
_{i}\rangle \langle \text{\textbf{y}}_{i}|,
\end{equation}%
in terms of the left and right SOV-basis. Here,%
\begin{equation}
\mu _{i}\equiv \frac{1}{\langle \text{\textbf{y}}_{i}|\text{\textbf{y}}%
_{i}\rangle },
\end{equation}%
is the so-called {\it Sklyanin's measure}; which is a discrete measure in the $XXZ$
spin 1/2 quantum chain. Explicitly, the SOV-decomposition of the identity reads: 
\begin{equation}
\mathbb{I}\equiv \sum_{h_{1},...,h_{\mathsf{N}}=0}^{1}\prod_{1\leq b<a\leq 
\mathsf{N}}(\eta_{a}^{2}q^{-2h_{a}}-\eta_{b}^{2}q^{-2h_{b}})\frac{%
|h_{1},...,h_{\mathsf{N}}\rangle \langle h_{1},...,h_{\mathsf{N}}|}{%
\prod_{b=1}^{\mathsf{N}}\omega(\eta_{b}q^{h_{b}})},
\label{Decomp-Id}
\end{equation}%
where:%
\begin{equation}
\omega(\eta)\equiv \eta^{\mathsf{N}-1},
\end{equation}%
and they are gauge dependent parameters.

\textbf{Remark 2.} Sklyanin's measure\footnote{%
See also \cite{Sm98} for further discussions on the measure.} has been first
introduced by Sklyanin in his article on quantum Toda chain \cite{Sk1}.
There it has been derived as a consequence of the self-adjointness of the
transfer matrix w.r.t. the scalar product. In particular, the Hermitian
properties of the operator zeros and their conjugate shift operators have
been fixed to assure the self-adjointness of the transfer matrix. \ In the
similar but more involved non-compact case of the Sinh-Gordon model \cite%
{BT06}, the problem related to the uniqueness of the definition of this
measure has been analyzed. There it has been proven that the measure is in
fact uniquely determined once the positive self-adjointness of the
generators $\mathsf{A}(\lambda )$ and $\mathsf{D}(\lambda )$ is required. The approach here used is suitable for general compact SOV-representations of 6-vertex Yang-Baxter algebra.

\subsection{SOV characterization of $\mathsf{\bar{T}}(\protect\lambda )$%
-spectrum}

Let us denote with $\Sigma _{\mathsf{\bar{T}}}$ the set of the eigenvalue
functions $t(\lambda )$ of the transfer matrix $\mathsf{\bar{T}}(\lambda )$,
then $\Sigma _{\mathsf{\bar{T}}}$ is contained in: 
\begin{equation}
\mathbb{C}_{even}[\lambda ,\lambda ^{-1}]_{\mathsf{N}-1}\text{ for }\mathsf{N%
}\text{ odd, \ \ \ }\mathbb{C}_{odd}[\lambda ,\lambda ^{-1}]_{\mathsf{N}-1}%
\text{ for }\mathsf{N}\text{ even},  \label{set-t}
\end{equation}%
where $\mathbb{C}_{\epsilon }[x,x^{-1}]_{\mathsf{M}}$ denotes the linear
space in the field $\mathbb{C}$ of the Laurent polynomials of degree $%
\mathsf{M}$ in the variable $x$ which are even or odd as stated in the index 
$\epsilon $.

\begin{theorem}
\label{C:T-eigenstates}If the inhomogeneities parameters $\{\eta _{1},\dots
,\eta _{\mathsf{N}}\}$ satisfy the conditions $\left( \ref{E-SOV}\right) $,
then the spectrum of $\mathsf{\bar{T}}(\lambda )$ is simple and $\Sigma _{{%
\mathsf{\bar{T}}}}$ coincides with the set of functions in (\ref{set-t})
which are solutions of the discrete system of equations:%
\begin{equation}
t(\eta _{a})t(\eta _{a}/q)=a(\eta _{a})d(\eta _{a}/q),\text{ \ \ }\forall
a\in \{1,...,\mathsf{N}\}.  \label{I-Functional-eq}
\end{equation}

\begin{itemize}
\item[\textsf{I)}] The right $\mathsf{\bar{T}}$-eigenstate corresponding to
a $t(\lambda )\in \Sigma _{\mathsf{\bar{T}}}$ is characterized by:%
\begin{equation}
|t\rangle =\sum_{h_{1},...,h_{\mathsf{N}}=0}^{1}\prod_{a=1}^{\mathsf{N}}%
\frac{Q_{t}(\eta _{a}q^{-h_{a}})}{\omega(\eta _{a}q^{-h_{a}})}%
\prod_{1\leq b<a\leq \mathsf{N}}(\eta^{2}_{a}q^{-2h_{a}}-\eta^{2}_{b}q^{-2h_{b}})|h_{1},...,h_{\mathsf{N}}\rangle
,  \label{eigenT-r-D}
\end{equation}%
where, up to an overall normalization of the state, the coefficients are
characterized by:%
\begin{equation}
Q_{t}(\eta _{a}/q)/Q_{t}(\eta _{a})=t(\eta _{a})/d(\eta _{a}/q).
\label{t-Q-relation}
\end{equation}

\item[\textsf{II)}] The left $\mathsf{\bar{T}}$-eigenstate corresponding to $%
t(\lambda )\in \Sigma _{\mathsf{\bar{T}}}$ is characterized by:%
\begin{equation}
\langle t|=\sum_{h_{1},...,h_{\mathsf{N}}=0}^{1}\prod_{a=1}^{\mathsf{N}}%
\frac{\bar Q_{t}(\eta _{a}q^{-h_{a}})}{\omega(\eta _{a}q^{-h_{a}})}%
\prod_{1\leq b<a\leq \mathsf{N}}(\eta^{2}_{a}q^{-2h_{a}}-\eta^{2}_{b}q^{-2h_{b}})\langle h_{1},...,h_{\mathsf{N}%
}|,  \label{eigenT-l-D}
\end{equation}%
where up to an overall normalization of the state the coefficients are
characterized by:%
\begin{equation}
\bar{Q}_{t}(\eta _{a}/q)/\bar{Q}_{t}(\eta _{a})=t(\eta _{a})/a(\eta _{a}).
\label{t-Qbar-relation}
\end{equation}
\end{itemize}
\end{theorem}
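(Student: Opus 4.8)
The plan is to work entirely within the SOV representation of Theorem \ref{Th-D-SOV}, exploiting that $\mathsf{D}(\lambda)$ is diagonal with simple spectrum on both $\mathcal{R}_{\mathsf{N}}$ and $\mathcal{L}_{\mathsf{N}}$. First I would characterize a right $\mathsf{\bar{T}}$-eigenstate $|t\rangle$ by expanding it on the right SOV-basis, $|t\rangle=\sum_{\mathbf{h}}\Psi_t(\mathbf{h})|h_1,\dots,h_{\mathsf{N}}\rangle$, and impose the eigenvalue equation $\mathsf{\bar{T}}(\lambda)|t\rangle=t(\lambda)|t\rangle$ at the special points $\lambda=\eta_a q^{-h_a}$. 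Since $\mathsf{\bar{T}}(\lambda)=\mathsf{B}(\lambda)+\mathsf{C}(\lambda)$, and from the quantum determinant relation $\mathsf{A}(\lambda)\mathsf{D}(\lambda/q)-\mathsf{B}(\lambda)\mathsf{C}(\lambda/q)=-a(\lambda)d(\lambda/q)$, one evaluates the action of $\mathsf{B}$ and $\mathsf{C}$ at these points using \eqref{C-SOV_D-right}--\eqref{B-SOV_D-right}. The upshot — exactly as in the cyclic-representation cases \cite{NT-10,GN12} — is that the eigenvalue equation at these $\mathsf{N}$ pairs of points decouples into a three-term recursion in each variable $h_a$ separately. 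Writing the coefficient function in the factorized Vandermonde-dressed form of \eqref{eigenT-r-D}, the recursion reduces to the single Baxter-type relation \eqref{t-Q-relation}, $Q_t(\eta_a/q)/Q_t(\eta_a)=t(\eta_a)/d(\eta_a/q)$, so that $Q_t(\eta_a q^{-h_a})$ for $h_a\in\{0,1\}$ is fixed up to one overall constant per variable — and the overall product up to a single global normalization. The left case \eqref{eigenT-l-D}--\eqref{t-Qbar-relation} is the mirror image, using instead that $\langle 0|\mathsf{A}(\lambda)=a(\lambda)\langle 0|$ so that $a$ replaces $d$.

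Next I would establish the characterization of the spectrum $\Sigma_{\mathsf{\bar{T}}}$ by the discrete system \eqref{I-Functional-eq}. The necessity direction is immediate: setting $\lambda=\eta_a$ in the quantum determinant identity and using $\langle h|\mathsf{B}(\eta_a)\mathsf{C}(\eta_a/q)|t\rangle$ together with the fact that $\mathsf{B}(\eta_a)$ and $\mathsf{C}(\eta_a/q)$ are, in SOV, essentially the shift operators raising/lowering $h_a$, one gets $\mathsf{\bar{T}}(\eta_a)\mathsf{\bar{T}}(\eta_a/q)=a(\eta_a)d(\eta_a/q)$ as an operator identity on the relevant subspace (this is the standard SOV "fusion at the nodes" argument); applied to $|t\rangle$ it yields \eqref{I-Functional-eq}. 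For sufficiency and completeness: given any $t(\lambda)$ in the space \eqref{set-t} solving \eqref{I-Functional-eq}, the relation \eqref{t-Q-relation} consistently defines the ratios $Q_t(\eta_a q^{-h_a})$ (the compatibility at $h_a=0,1$ being precisely condition \eqref{I-Functional-eq} since $d(\eta_a)=0$ forces care, but the product $a(\eta_a)d(\eta_a/q)$ is exactly what makes the two evaluations consistent), hence \eqref{eigenT-r-D} produces a genuine eigenstate. A counting argument then closes completeness: the space \eqref{set-t} has dimension $\mathsf{N}$ as a set of free data (the values $t(\eta_a)$ — once $t$ at enough points is known the Laurent polynomial is determined by its degree and parity), the system \eqref{I-Functional-eq} is $\mathsf{N}$ equations each quadratic, giving generically $2^{\mathsf{N}}$ solutions, matching $\dim\mathcal{R}_{\mathsf{N}}=2^{\mathsf{N}}$; combined with the injectivity $t\mapsto|t\rangle$ this shows every $\mathsf{\bar{T}}$-eigenvalue is of this form and the map is a bijection onto the solution set.

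Finally, simplicity of the spectrum follows from the structure of the eigenstate formula: two eigenvalues $t,t'$ giving the same eigenstate would give proportional coefficient functions, hence $Q_t(\eta_a q^{-h_a})\propto Q_{t'}(\eta_a q^{-h_a})$ for all $a$ and all $h_a$, which via \eqref{t-Q-relation} forces $t(\eta_a)=t'(\eta_a)$ for all $a$; since both lie in \eqref{set-t} — Laurent polynomials of degree $\mathsf{N}-1$ with fixed parity, hence determined by $\mathsf{N}$ values — one concludes $t\equiv t'$. Equivalently, the non-degeneracy of the discrete solution set is inherited from the fact that $\mathsf{D}(\lambda)$ already separates the $2^{\mathsf{N}}$ basis vectors and the eigenvalue map is injective.

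The main obstacle I anticipate is the sufficiency/consistency step: one must show that condition \eqref{I-Functional-eq}, which involves only the two values $t(\eta_a)$ and $t(\eta_a/q)$, is \emph{exactly} what is needed for \eqref{t-Q-relation} to define an honest eigenstate — i.e., that imposing the eigenvalue equation at the points $\eta_a q^{-h_a}$ is not merely necessary but, together with the known degree and parity of $\mathsf{\bar{T}}(\lambda)$ (it is a Laurent polynomial of degree $\mathsf{N}-1$, even for $\mathsf{N}$ odd and odd for $\mathsf{N}$ even, by the interpolation remark at the end of the proof of Theorem \ref{Th-D-SOV}), sufficient to guarantee the eigenvalue equation holds for \emph{all} $\lambda$. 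This is the interpolation argument: $t(\lambda)$ and the eigenvalue of $\mathsf{\bar{T}}(\lambda)$ on $|t\rangle$ are both Laurent polynomials in the same space \eqref{set-t}, which has dimension $\mathsf{N}$, and they agree at the $\mathsf{N}$ points $\{\eta_a\}_{a=1}^{\mathsf{N}}$ (say), hence coincide identically — provided the $\eta_a$ are distinct, which is ensured by \eqref{E-SOV}. Handling the parity bookkeeping and verifying the point count matches the dimension is the one place where the even/odd-$\mathsf{N}$ distinction must be tracked carefully.
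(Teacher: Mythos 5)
Your proposal follows essentially the same route as the paper's proof: the SOV action of $\mathsf{B}$, $\mathsf{C}$ collapses to a single shift at $\lambda=\eta_a q^{-h_a}$, producing a homogeneous $2\times 2$ system per site whose solvability condition is \eqref{I-Functional-eq} and whose rank-one structure (since $a(\eta_a)\neq 0$, $d(\eta_a/q)\neq 0$) fixes the wave function up to a single global constant — which is the actual source of simplicity, rather than the injectivity argument in your last paragraph — and the converse is closed by Laurent-polynomial interpolation in the $\mathsf{N}$-dimensional space \eqref{set-t}, exactly as you describe. The operator identity $\mathsf{\bar{T}}(\eta_a)\mathsf{\bar{T}}(\eta_a/q)=a(\eta_a)d(\eta_a/q)$ and the B\'ezout-type counting of solutions are correct but superfluous: the paper obtains necessity directly from the vanishing of the $2\times 2$ determinant and never needs to count solutions of \eqref{I-Functional-eq}.
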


\begin{proof}
In the SOV representations the spectral problem for {$\mathsf{\bar{T}}$}$%
(\lambda )$ is reduced to a discrete system of $2^{\mathsf{N}}$ Baxter-like
equations. Indeed, let $\left\langle t\right\vert $ be a $\mathsf{\bar{T}}$%
-eigenstate corresponding to the eigenvalue $t(\lambda )\in \Sigma _{{%
\mathsf{\bar{T}}}}$, then the coefficients (\textit{wave-functions})%
\begin{equation}
\Psi _{t}(\text{\textbf{h}})\equiv \left\langle t\right\vert h_{1},...,h_{%
\mathsf{N}}\rangle
\end{equation}%
of $\left\langle t\right\vert $ in the SOV-basis satisfy the equations: 
\begin{equation}
t(\eta _{n}q^{-h_{n}})\Psi _{t}(\text{\textbf{h}})\,=\,a(\eta
_{n}q^{-h_{n}})\Psi _{t}(\mathsf{T}_{n}^{+}(\text{\textbf{h}}))+d(\eta
_{n}q^{-h_{n}})\Psi _{t}(\mathsf{T}_{n}^{-}(\text{\textbf{h}})),
\label{SOVBax1}
\end{equation}%
for any$\,n\in \{1,...,\mathsf{N}\}$ and \textbf{h}$\in \{0,1\}^{\mathsf{%
N}}$, where we have denoted:%
\begin{equation}
\mathsf{T}_{n}^{\pm }(\text{\textbf{h}})\equiv (h_{1},\dots ,h_{n}\pm
1,\dots ,h_{\mathsf{N}}).
\end{equation}%
Taking into account that:%
\begin{equation}
a(\eta _{n}/q)=d(\eta _{n})=0,
\end{equation}%
the previous system of equations can be rewritten as a system of homogeneous
equations:%
\begin{equation}
\left( 
\begin{array}{cc}
t(\eta _{n}) & -a(\eta _{n}) \\ 
-d(\eta _{n}/q) & t(\eta _{n}/q)%
\end{array}%
\right) \left( 
\begin{array}{c}
\Psi _{t}(h_{1},...,h_{n}=0,...,h_{1}) \\ 
\Psi _{t}(h_{1},...,h_{n}=1,...,h_{1})%
\end{array}%
\right) =\left( 
\begin{array}{c}
0 \\ 
0%
\end{array}%
\right) ,  \label{homo-system}
\end{equation}%
for any$\,n\in \{1,...,\mathsf{N}\}$ with $h_{m\neq n}\in \{0,1\}$. Note
that the condition $t(\lambda )\in \Sigma _{{\mathsf{\bar{T}}}}$ implies
that the previous system has to have a non-trivial solution, i.e. the
determinants of the $2\times 2$ matrices in $\left( \ref{homo-system}\right) 
$ must be zero for any$\,n\in \{1,...,\mathsf{N}\}$. So that the condition $%
t(\lambda )\in \Sigma _{{\mathsf{\bar{T}}}}$ implies (\ref{I-Functional-eq}%
). Now let us observe that being%
\begin{equation}
a(\eta _{n})\neq 0\text{\ \ and \ }d(\eta _{n}/q)\neq 0,  \label{Rank1}
\end{equation}%
the rank of the matrices in $\left( \ref{homo-system}\right) $ is 1 and then
up to an overall normalization the solution is unique:%
\begin{equation}
\frac{\Psi _{t}(h_{1},...,h_{n}=1,...,h_{1})}{\Psi
_{t}(h_{1},...,h_{n}=0,...,h_{1})}=\frac{t(\eta _{a})}{a(\eta _{a})},
\end{equation}%
for any$\,n\in \{1,...,\mathsf{N}\}$ with $h_{m\neq n}\in \{0,1\}$. This
implies that given a $t(\lambda )\in \Sigma _{{\mathsf{\bar{T}}}}$ there
exist (up to normalization) one and only one corresponding $\mathsf{\bar{T}}$%
-eigenstate $\left\langle t\right\vert $ with coefficients which have the
factorized form given in $\left( \ref{eigenT-l-D}\right) $-$\left( \ref%
{t-Qbar-relation}\right) $ and then the $\mathsf{\bar{T}}$-spectrum is
simple.

Vice versa, let us take a $t(\lambda )$ in the set of functions (\ref{set-t}%
) which is solution of the system (\ref{I-Functional-eq}) then for the state 
$\left\langle t\right\vert $ constructed by $\left( \ref{eigenT-l-D}\right) $%
-$\left( \ref{t-Qbar-relation}\right) $ it holds:%
\begin{equation}
\left\langle t\right\vert \mathsf{\bar{T}}(\eta _{n}q^{-h_{n}})|h_{1},...,h_{%
\mathsf{N}}\rangle =t(\eta _{n}q^{-h_{n}})\left\langle t\right\vert
h_{1},...,h_{\mathsf{N}}\rangle \text{ \ }\forall n\in \{1,...,\mathsf{N}\},
\end{equation}
for any $\mathsf{D}$-eigenstate $|h_{1},...,h_{\mathsf{N}}\rangle $.
Then being $\mathsf{\bar{T}}(\lambda )$ a Laurent polynomials of degree 
$\mathsf{N}-1$ in $\lambda $, respectively even or odd for $\mathsf{N}$
odd or even, it follows:%
\begin{equation}
\left\langle t\right\vert \mathsf{\bar{T}}(\lambda )|h_{1},...,h_{\mathsf{N}%
}\rangle =t(\lambda )\left\langle t\right\vert h_{1},...,h_{\mathsf{N}%
}\rangle ,
\end{equation}%
that is $t(\lambda )\in \Sigma _{{\mathsf{\bar{T}}}}$\ and $\left\langle
t\right\vert $ is the corresponding $\mathsf{\bar{T}}$-eigenstate.
\end{proof}

The previous theorem gives a well defined and complete
characterization of the spectrum of the transfer matrix $\mathsf{\bar{T}}%
(\lambda )$ and the normality of $\mathsf{\bar{T}}(\lambda )$ implies that
the discrete system of equations has to admit 2$^{\mathsf{N}}$ independent
solutions in the class of functions (\ref{set-t}). However, it is worth
pointing out that such a characterization of the spectrum is not the most
efficient; in particular, for the analysis of the continuum limit. A
reformulation of the SOV characterization of the $\mathsf{\bar{T}}$-spectrum
by functional equations is then important and it can be achieved by the
construction of a Baxter Q-operator whose functional equation, computed in
the spectrum of the $\mathsf{D}$-zeros, coincides with the finite system of
Baxter-like equations $\left( \ref{SOVBax1}\right) $. For the model under
consideration a Q-operator has been constructed in \cite{BBOY95} and
it satisfies\footnote{%
If we translate it in our notations and we introduce the inhomogeneities.}:%
\begin{equation}
\left[ \mathsf{\bar{T}}(\lambda ),Q(\lambda )\right] =0,\text{ }\left[
Q(\lambda ),Q(\mu )\right] =0,\text{ \ }\mathsf{\bar{T}}(\lambda )Q(\lambda
)=a\left( \lambda \right) Q(\lambda /q)+d\left( \lambda \right) Q(\lambda
q)\,.  \label{Q-op-ch}
\end{equation}%
Moreover, $Q(\lambda )$ is a Laurent polynomial with eigenvalues of the form:%
\begin{equation}
\text{\textsc{q}}(\lambda )=\lambda ^{-\mathsf{N}/2}\prod_{k=1}^{\mathsf{N}%
}(\lambda -\lambda _{k})\,,  \label{Q-eigen}
\end{equation}%
where the $\{\lambda _{j}\}$ are solutions of the Bethe equations:%
\begin{equation}
\prod_{n=1}^{\mathsf{N}}\frac{q^{2}\lambda _{k}^{2}-\eta _{n}^{2}}{\lambda
_{k}^{2}-\eta _{n}^{2}}=-\prod_{a=1}^{\mathsf{N}}\frac{q\lambda _{k}-\lambda
_{a}}{\lambda _{k}/q-\lambda _{a}}\text{ \ \ \ }\forall k\in \{1,...,\mathsf{%
N}\},  \label{Bethe-Ansatz}
\end{equation}%
as a consequence of the requirement of analyticity of the transfer matrix
eigenvalues.

Let us observe that for $q=e^{2i\pi p^{\prime }/p}$ ($p,$ $p^{\prime }\in
Z^{\geq 0}$) a $p$-root of unit the consistence condition (i.e. the
existence of non-trivial solutions) for the Baxter eigenvalue functional
equation lead to the functional equation\footnote{%
This characterization open the possibility to construct solutions of the
Baxter equation in terms of cofactors of the matrix $D(\lambda )$ as
explained in \cite{N-10}.}: 
\begin{equation}
\det_{p}D(\Lambda )=0,\text{ \ \ }\Lambda \in \mathbb{C}
\label{fun-eq-T-eigen}
\end{equation}%
involving only the $\mathsf{\bar{T}}$-eigenvalue $t(\lambda )$, where $%
\Lambda =\lambda ^{p}$ and $D(\lambda )$ is the $p\times p$ matrix:%
\begin{equation}
D(\lambda )\equiv 
\begin{pmatrix}
t(\lambda ) & -d(\lambda ) & 0 & \cdots & 0 & -a(\lambda ) \\ 
-a(q\lambda ) & t(q\lambda ) & -d(q\lambda ) & 0 & \cdots & 0 \\ 
0 & {\quad }\ddots &  &  &  & \vdots \\ 
\vdots &  & \cdots &  &  & \vdots \\ 
\vdots &  &  & \cdots &  & \vdots \\ 
\vdots &  &  &  & \ddots {\qquad } & 0 \\ 
0 & \ldots & 0 & -a(q^{p-2}\lambda ) & t(q^{p-2}\lambda ) & 
-d(q^{p-2}\lambda ) \\ 
-d(q^{p-1}\lambda ) & 0 & \ldots & 0 & -a(q^{p-1}\lambda ) & 
t(q^{p-1}\lambda )%
\end{pmatrix}%
.  \label{D-matrix}
\end{equation}%
It is worth remarking that the equation $\left( \ref{fun-eq-T-eigen}\right) $
coincides with the equation obtained by combining the fusion of transfer
matrices \cite{KRS,KR}\ and the truncation identity which holds at the root
of unit\footnote{%
This method was first developed for the RSOS model in \cite{BR89} and was
adapted to spin chains in \cite{N02,N03}.}.

\section{Scalar products}\label{SP}

Let $\langle \alpha |$ and $|\beta \rangle $ be two arbitrary left and right
{\it separate states} which by definition have the following factorized form in the SOV-representation:%
\begin{align}
\langle \alpha |& =\sum_{h_{1},...,h_{\mathsf{N}}=0}^{1}\prod_{a=1}^{\mathsf{%
N}}\frac{\alpha _{a}(\eta _{a}q^{-h_{a}})}{\omega(\eta _{a}q^{-h_{a}})}%
\prod_{1\leq b<a\leq \mathsf{N}}(\eta^{2}_{a}q^{-2h_{a}}-\eta^{2}_{b}q^{-2h_{b}})\langle h_{1},...,h_{\mathsf{N}%
}|,  \label{Fact-left-SOV} \\
|\beta \rangle & =\sum_{h_{1},...,h_{\mathsf{N}}=0}^{1}\prod_{a=1}^{\mathsf{N%
}}\frac{\beta _{a}(\eta _{a}q^{-h_{a}})}{\omega(\eta _{a}q^{-h_{a}})}%
\prod_{1\leq b<a\leq \mathsf{N}}(\eta^{2}_{a}q^{-2h_{a}}-\eta^{2}_{b}q^{-2h_{b}})|h_{1},...,h_{\mathsf{N}}\rangle
,  \label{Fact-right-SOV}
\end{align}%
The interest toward these kind of states is due to the following:

\begin{proposition}The two states\footnote{%
Note that by using the Hermitian conjugation properties of the Yang-Baxter
generators the vector $\left( \langle \alpha |\right) ^{\dagger }\in \mathcal{%
R}_{\mathsf{N}}$ can simply be written in the $\mathsf{A}(\lambda )$
SOV-basis, see appendix for the definition of this basis. The formula %
\rf{scalar-product-general} also describes the action of the covector $%
\langle \alpha |$ on the vector $|\beta \rangle $.} $\left( \langle \alpha
|\right) ^{\dagger }$ and $|\beta \rangle $ have the following scalar product:
\begin{equation}\label{scalar-product-general}
\langle \alpha |\beta \rangle =\det_{\mathsf{N}}||\mathcal{M}_{a,b}^{\left(
\alpha ,\beta \right) }||\text{ \ \ with \ }\mathcal{M}_{a,b}^{\left( \alpha
,\beta \right) }\equiv \left( \eta _{a}\right) ^{2(b-1)}\sum_{h=0}^{1}\frac{%
\alpha _{a}(\eta _{a}q^{-h})\beta _{a}(\eta _{a}q^{-h})}{\omega(\eta
_{a}q^{-h})}q^{-2(b-1)h}.
\end{equation}
\end{proposition}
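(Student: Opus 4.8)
The plan is to evaluate the bilinear pairing $\langle\alpha|\beta\rangle$ of the covector $\langle\alpha|$ on the vector $|\beta\rangle$ directly in the SOV basis, using the orthogonality of the left and right $\mathsf{D}(\lambda)$-eigenbases established in the previous proposition, and then to recognise the resulting finite sum as the row-expansion of the announced $\mathsf{N}\times\mathsf{N}$ determinant. That $\langle\alpha|\beta\rangle$ also coincides with the Hilbert-space product $\big((\langle\alpha|)^{\dagger},|\beta\rangle\big)$ is a separate remark, following from the Hermitian conjugation properties of the Yang--Baxter generators and the $\mathsf{A}(\lambda)$-SOV basis of the appendix; it plays no role in the computation.

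First I would substitute the factorised expansions $(\ref{Fact-left-SOV})$ and $(\ref{Fact-right-SOV})$ into $\langle\alpha|\beta\rangle$. By the previous proposition the pairing of the two SOV bases is diagonal, $\langle h_{1},\dots,h_{\mathsf{N}}|k_{1},\dots,k_{\mathsf{N}}\rangle=\delta_{\mathbf{h},\mathbf{k}}\,M_{\mathbf{h}\mathbf{h}}$ with $M_{\mathbf{h}\mathbf{h}}$ as in $(\ref{M_jj})$ (since $\langle\mathbf{y}_{j}|\mathbf{y}_{k}\rangle=(U^{(L)}U^{(R)})_{jk}=M_{jk}$ because the $\sigma^{z}$-basis is orthonormal), so the double sum collapses to a single sum over $\mathbf{h}\in\{0,1\}^{\mathsf{N}}$ with summand
\begin{equation}
\Big(\prod_{a=1}^{\mathsf{N}}\frac{\alpha_{a}(\eta_{a}q^{-h_{a}})\beta_{a}(\eta_{a}q^{-h_{a}})}{\omega(\eta_{a}q^{-h_{a}})^{2}}\Big)\Big(\prod_{1\leq b<a\leq\mathsf{N}}(\eta_{a}^{2}q^{-2h_{a}}-\eta_{b}^{2}q^{-2h_{b}})\Big)^{2}M_{\mathbf{h}\mathbf{h}}.
\end{equation}
The crux is then the elementary identity
\begin{equation}
\Big(\prod_{1\leq b<a\leq\mathsf{N}}(\eta_{a}^{2}q^{-2h_{a}}-\eta_{b}^{2}q^{-2h_{b}})\Big)M_{\mathbf{h}\mathbf{h}}=\prod_{a=1}^{\mathsf{N}}\omega(\eta_{a}q^{-h_{a}}),
\end{equation}
obtained by writing each factor of $M_{\mathbf{h}\mathbf{h}}$ and each Vandermonde factor with common numerator $(\eta_{a}^{2}q^{2h_{b}}-\eta_{b}^{2}q^{2h_{a}})$, whereupon the two products combine to $\prod_{b<a}\eta_{a}\eta_{b}q^{-h_{a}-h_{b}}$, and then noting that each index appears $\mathsf{N}-1$ times among the pairs $\{(b,a):b<a\}$, so that this equals $\prod_{a}(\eta_{a}q^{-h_{a}})^{\mathsf{N}-1}=\prod_{a}\omega(\eta_{a}q^{-h_{a}})$ since $\omega(\eta)=\eta^{\mathsf{N}-1}$. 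Inserting it cancels one power of the Vandermonde product and one power of $\prod_{a}\omega$, reducing the scalar product to
\begin{equation}
\langle\alpha|\beta\rangle=\sum_{h_{1},\dots,h_{\mathsf{N}}=0}^{1}\Big(\prod_{a=1}^{\mathsf{N}}\frac{\alpha_{a}(\eta_{a}q^{-h_{a}})\beta_{a}(\eta_{a}q^{-h_{a}})}{\omega(\eta_{a}q^{-h_{a}})}\Big)\prod_{1\leq b<a\leq\mathsf{N}}(\eta_{a}^{2}q^{-2h_{a}}-\eta_{b}^{2}q^{-2h_{b}}).
\end{equation}

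It remains to recognise the right-hand side as $\det_{\mathsf{N}}||\mathcal{M}_{a,b}^{(\alpha,\beta)}||$. Writing $\mathcal{M}_{a,b}^{(\alpha,\beta)}=\sum_{h=0}^{1}\phi_{a}(h)\,(\eta_{a}q^{-h})^{2(b-1)}$ with $\phi_{a}(h)\equiv\alpha_{a}(\eta_{a}q^{-h})\beta_{a}(\eta_{a}q^{-h})/\omega(\eta_{a}q^{-h})$, each row $a$ of $\mathcal{M}^{(\alpha,\beta)}$ is a two-term sum, so multilinearity of the determinant in its rows yields $\det_{\mathsf{N}}||\mathcal{M}_{a,b}^{(\alpha,\beta)}||=\sum_{h_{1},\dots,h_{\mathsf{N}}=0}^{1}\big(\prod_{a}\phi_{a}(h_{a})\big)\det_{\mathsf{N}}||(\eta_{a}q^{-h_{a}})^{2(b-1)}||$; the remaining determinant is the Vandermonde determinant in the nodes $\eta_{a}^{2}q^{-2h_{a}}$, equal to $\prod_{1\leq b<a\leq\mathsf{N}}(\eta_{a}^{2}q^{-2h_{a}}-\eta_{b}^{2}q^{-2h_{b}})$, and the two expressions coincide term by term. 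The main (and only mildly delicate) obstacle is the bookkeeping in the Vandermonde--Sklyanin-measure cancellation of the second step; the final determinant expansion is then purely routine.
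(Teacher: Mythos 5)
Your proof is correct and follows essentially the same route as the paper: reduce $\langle\alpha|\beta\rangle$ to the single sum $\sum_{\mathbf{h}}V(\eta_1^2q^{-2h_1},\dots,\eta_{\mathsf{N}}^2q^{-2h_{\mathsf{N}}})\prod_a\alpha_a\beta_a/\omega$ and then recognise it as the row-multilinear expansion of $\det_{\mathsf{N}}||\mathcal{M}^{(\alpha,\beta)}_{a,b}||$. The only difference is that you spell out the cancellation between one Vandermonde factor and the Sklyanin measure $M_{\mathbf{h}\mathbf{h}}$, which the paper leaves implicit in the phrase ``from the SOV-decomposition''; your bookkeeping there is accurate.
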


\begin{proof}
From the SOV-decomposition, we have:%
\begin{equation}
\langle \alpha |\beta \rangle =\sum_{h_{1},...,h_{\mathsf{N}}=0}^{1}V(\left(
\eta _{1}/q^{h_{1}}\right) ^{2},...,\left( \eta _{\mathsf{N}}/q^{h_{\mathsf{N%
}}}\right) ^{2})\prod_{a=1}^{\mathsf{N}}\frac{\alpha _{a}(\eta
_{a}q^{-h_{a}})\beta _{a}(\eta _{a}q^{-h_{a}})}{\omega(\eta
_{a}q^{-h_{a}})},
\end{equation}%
where $V(x_{1},...,x_{\mathsf{N}})\equiv \prod_{1\leq b<a\leq \mathsf{N}%
}(x_{a}-x_{b})$ is the Vandermonde determinant. From this formula by using
the multilinearity of the determinant w.r.t. the rows we prove the
proposition.
\end{proof}

Note that the form of determinant for these scalar products is not restricted to the case in which one of the two
states is an eigenstate of the transfer matrix; on the contrary to what
happens for the scalar product formulae in the framework of the algebraic
Bethe ansatz. It is worth noticing that from the scalar product formula we
can prove:

\begin{corollary}
Transfer matrix eigenstates corresponding to different eigenvalues are
orthogonal states.
\end{corollary}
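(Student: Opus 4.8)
The plan is to deduce the corollary directly from the determinant scalar-product formula \rf{scalar-product-general} by exploiting the characterization of the transfer matrix eigenstates given in Theorem \ref{C:T-eigenstates}. First I would take two right eigenstates $|t\rangle$ and $|t'\rangle$ with $t\neq t'$ in $\Sigma_{\mathsf{\bar{T}}}$; by the Hermitian-conjugation remark following Proposition for the scalar product, computing $\langle t|t'\rangle$ is the same as applying a left separate state built from $t$ against the right separate state built from $t'$. Both are separate states in the sense of \rf{Fact-left-SOV}--\rf{Fact-right-SOV}, with coefficients governed by \rf{t-Q-relation} and \rf{t-Qbar-relation}, so the determinant formula \rf{scalar-product-general} applies.

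Next I would show the matrix $||\mathcal{M}^{(t,t')}_{a,b}||$ is singular when $t\neq t'$. The key computational step is to rewrite the single-site sums $\sum_{h=0}^{1}$ in $\mathcal{M}^{(t,t')}_{a,b}$ using the Baxter-type relations: the product $\alpha_a(\eta_a q^{-h})\beta_a(\eta_a q^{-h})$ is, up to the common gauge factors, $\bar Q_t(\eta_a q^{-h})Q_{t'}(\eta_a q^{-h})$, and at $h=0,1$ these $Q$-values are tied to the eigenvalues $t(\eta_a)$ and $t'(\eta_a)$ via \rf{t-Q-relation}--\rf{t-Qbar-relation}. I expect that after substituting these relations, the row labelled by $\eta_a$ becomes proportional to a fixed vector times $(t(\eta_a)-t'(\eta_a))$ plus a piece that is, across all rows $a$, a rank-$(\mathsf{N}-1)$ contribution — or, more plausibly in this SOV setting, the whole matrix $\mathcal{M}^{(t,t')}$ is obtained from a product involving the matrix $\mathcal{M}^{(t,t)}$ (whose determinant computes $\langle t|t\rangle\neq 0$) with a correction that carries a nontrivial kernel whenever $t\neq t'$. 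The cleanest route is probably: use that both $|t\rangle$ and $|t'\rangle$ satisfy \rf{SOVBax1} with their respective eigenvalues, expand $t(\eta_a q^{-h})\langle\cdots$ two ways, and conclude that the difference $(t-t')$ evaluated at the SOV points forces a linear dependence among the columns (the columns being indexed by $b=1,\dots,\mathsf{N}$, i.e. by the Laurent-polynomial structure $(\eta_a)^{2(b-1)}q^{-2(b-1)h}$, which is exactly enough degrees of freedom to express the at-most-$(\mathsf{N}-1)$-degree even/odd Laurent polynomial $t(\lambda)-t'(\lambda)$).

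The concrete mechanism I would push on: since $t,t'\in\mathbb{C}_{even}[\lambda,\lambda^{-1}]_{\mathsf{N}-1}$ (or the odd space), the difference $t-t'$ lies in the same $\mathsf{N}$-dimensional space, so there are constants $c_1,\dots,c_{\mathsf{N}}$, not all zero, with $t(\lambda)-t'(\lambda)=\sum_{b=1}^{\mathsf{N}}c_b\,(\lambda^{2(b-1)}-\lambda^{-2(b-1)+\ \text{suitable})}$ — matching the monomials that appear as $(\eta_a)^{2(b-1)}q^{-2(b-1)h}$ in $\mathcal{M}^{(t,t')}_{a,b}$. Then $\sum_b c_b \mathcal{M}^{(t,t')}_{a,b}$ telescopes, via the Baxter relation for $|t'\rangle$ applied at $\lambda=\eta_a q^{-h}$ and the companion relation for $\langle t|$, into something that vanishes at every $h\in\{0,1\}$ and every $a$, i.e. the columns are linearly dependent and $\det_{\mathsf{N}}||\mathcal{M}^{(t,t')}_{a,b}||=0$. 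Hence $\langle t|t'\rangle=0$.

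The main obstacle, and the place I would spend the real work, is pinning down the exact combination of columns that vanishes — i.e. matching the monomial basis implicit in the index $b$ (the powers $q^{-2(b-1)h}$ weighted by $\eta_a^{2(b-1)}$) against the even/odd Laurent-polynomial space in which $t-t'$ lives, and checking the parity bookkeeping for $\mathsf{N}$ even versus odd so that the degree count comes out to exactly $\mathsf{N}$. Once that bookkeeping is set up, the cancellation via the Baxter relations \rf{SOVBax1} (equivalently \rf{t-Q-relation}--\rf{t-Qbar-relation}) is essentially forced, and the corollary follows. An alternative, perhaps slicker, argument I would keep in reserve is the operator-theoretic one: $\mathsf{\bar{T}}(\lambda)$ is normal in the regimes of the first Lemma, so eigenvectors for distinct eigenvalues are automatically orthogonal — but since the present corollary is stated without restricting to those regimes, the determinant-formula proof is the one to carry out in full generality.
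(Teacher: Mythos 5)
Your proposal is correct and follows essentially the same route as the paper: the null vector of the matrix $||\Phi_{a,b}^{(t,t')}||$ is exactly the vector of coefficients of $t(\lambda)-t'(\lambda)$ in the monomial basis $\lambda^{-\mathsf{N}-1+2b}$ (the factor $1/\omega(\eta_a q^{-h})=(\eta_a q^{-h})^{1-\mathsf{N}}$ already built into the matrix elements absorbs the degree shift, so the parity bookkeeping you worried about is automatic), and the resulting expression $\sum_{c=0}^{1}Q_{t'}(\eta_a q^{-c})\bar{Q}_t(\eta_a q^{-c})\left(t'(\eta_a q^{-c})-t(\eta_a q^{-c})\right)$ is killed by the relations \rf{t-Q-relation}--\rf{t-Qbar-relation}. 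The one small correction to your description is that the two terms $c=0$ and $c=1$ do not vanish separately but cancel against each other after substituting the Baxter relations.
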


\begin{proof}
Let us denote with $|t\rangle $ and $|t^{\prime }\rangle $ two eigenstates
of $\mathsf{\bar{T}}(\lambda )$ with eigenvalues $t(\lambda )$ and $%
t^{\prime }(\lambda )$. To prove the corollary, we have to prove that:%
\begin{equation}
\det_{\mathsf{N}}||\Phi _{a,b}^{\left( t,t^{\prime }\right) }||=0\text{ \ \
with \ }\Phi _{a,b}^{\left( t,t^{\prime }\right) }\equiv \left( \eta
_{a}\right) ^{2(b-1)}\sum_{c=0}^{1}\frac{Q_{t^{\prime }}(\eta _{a}q^{-c})%
\bar{Q}_{t}(\eta _{a}q^{-c})}{\omega(\eta _{a}q^{-c})}q^{-2(b-1)c}.
\label{orth-cond}
\end{equation}%
It is enough to show the existence of a non-zero vector V$^{\left(
t,t^{\prime }\right) }$ such that:%
\begin{equation}
\sum_{b=1}^{\mathsf{N}}\Phi _{a,b}^{\left( t,t^{\prime }\right) }\text{V}%
_{b}^{\left( t,t^{\prime }\right) }=0\text{ \ \ \ \ }\forall a\in \{1,...,%
\mathsf{N}\}.  \label{zero-eigenvector}
\end{equation}%
The transfer matrix eigenvalues are Laurent polynomials of degree $\mathsf{N}
$-1 (even for $\mathsf{N}$-1 even and odd for $\mathsf{N}$-1 odd) of the
form:%
\begin{equation}
t(\lambda )=\sum_{b=1}^{\mathsf{N}}c_{b}\lambda ^{-\mathsf{N}-1+2b},\text{ \
\ }t^{\prime }(\lambda )=\sum_{b=1}^{\mathsf{N}}c_{b}^{\prime }\lambda ^{-%
\mathsf{N}-1+2b},
\end{equation}%
so if we define:%
\begin{equation}
\text{V}_{b}^{\left( t,t^{\prime }\right) }\equiv c_{b}^{\prime }-c_{b}\text{%
\ \ \ }\forall b\in \{1,...,\mathsf{N}\},
\end{equation}%
it results:%
\begin{equation}
\sum_{b=1}^{\mathsf{N}}\Phi _{a,b}^{\left( t,t^{\prime }\right) }\text{V}%
_{b}^{\left( t,t^{\prime }\right) }=\sum_{c=0}^{1}Q_{t^{\prime }}(\eta
_{a}q^{-c})\bar{Q}_{t}(\eta _{a}q^{-c})(t^{\prime }(\eta _{a}q^{-c})-t(\eta
_{a}q^{-c})).  \label{zero-eigenvector-1}
\end{equation}%
We can use now the discrete system of Baxter equations satisfied by the $%
Q_{t^{\prime }}(\eta _{a}q^{-h_{a}})$ and $\bar{Q}_{t}(\eta _{a}q^{-h_{a}})$
to rewrite:%
\begin{equation}
Q_{t^{\prime }}(\eta _{a})\bar{Q}_{t}(\eta _{a})(t^{\prime }(\eta
_{a})-t(\eta _{a}))=a(\eta _{a})Q_{t^{\prime }}(\eta _{a}/q)\bar{Q}_{t}(\eta
_{a})-d(\eta _{a}/q)Q_{t^{\prime }}(\eta _{a})\bar{Q}_{t}(\eta _{a}/q),
\end{equation}%
and%
\begin{equation}
Q_{t^{\prime }}(\eta _{a}/q)\bar{Q}_{t}(\eta _{a}/q)(t^{\prime }(\eta
_{a}/q)-t(\eta _{a}/q))=d(\eta _{a}/q)Q_{t^{\prime }}(\eta _{a})\bar{Q}%
_{t}(\eta _{a}/q)-a(\eta _{a})Q_{t^{\prime }}(\eta _{a}/q)\bar{Q}_{t}(\eta
_{a})
\end{equation}%
and by substituting them in (\ref{zero-eigenvector-1}) we get (\ref%
{zero-eigenvector}).
\end{proof}

\section{Reconstruction of local operators}

The first reconstruction of local operators has been achieved in \cite%
{KitMT99}, for the case of the $XXZ$ spin 1/2 chain. In \cite{MaiT00}, then
the solution has been extended to fundamental lattice models, i.e. those
with isomorphic\ auxiliary and local quantum space, for which the monodromy
matrix becomes the permutation operator at a special value of the spectral
parameter. Here, we present a simple modification of the reconstruction
formula of \cite{KitMT99} to adapt it to the current antiperiodic
case.

\begin{proposition}
The following reconstruction holds in terms of the antiperiodic transfer
matrix:%
\begin{eqnarray}
X_{n} &=&\prod_{b=1}^{n-1}\mathsf{\bar{T}}(\eta _{b})\text{tr}_{0}(\mathsf{M}%
_{0}(\eta _{n})X_{0}\sigma _{0}^{x})\prod_{b=1}^{n}\frac{\mathsf{\bar{T}}%
(\eta _{b}/q)}{\det {}\mathsf{\bar{M}}(\eta _{b})}  \label{R-AP-1} \\
&=&\prod_{b=1}^{n}\mathsf{\bar{T}}(\eta _{b})\frac{\text{tr}_{0}(\sigma
_{0}^{(z)}\mathsf{M}_{0}^{t_{0}}(\eta _{n}/q)\sigma _{0}^{z}X_{0}\sigma
_{0}^{x})}{\det \mathsf{M}(\eta _{n})}\prod_{b=1}^{n-1}\frac{\mathsf{\bar{T}}%
(\eta _{b}/q)}{\det {}\mathsf{\bar{M}}(\eta _{b})},  \label{R-AP-2}
\end{eqnarray}%
where it holds:%
\begin{equation}
\mathsf{\bar{T}}(\eta _{n})\mathsf{\bar{T}}(\eta _{n}/q)=\det {}\mathsf{\bar{%
M}}(\eta _{b})=\mathsf{B}(\eta _{b})\mathsf{C}(\eta _{b}/q)-\mathsf{A}(\eta
_{b})\mathsf{D}(\eta _{b}/q).  \label{R-AP-3}
\end{equation}
\end{proposition}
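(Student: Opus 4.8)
The plan is to follow the Kitanine--Maillet--Terras solution of the quantum inverse problem \cite{KitMT99}, modified for the $\sigma^{x}$-twisted trace, and to treat the three displayed identities in turn. I would first settle \rf{R-AP-3}, since it is what makes the other two formulas meaningful. From the quantum determinant relation, $\det\mathsf{\bar{M}}(\lambda)\equiv\mathsf{B}(\lambda)\mathsf{C}(\lambda/q)-\mathsf{A}(\lambda)\mathsf{D}(\lambda/q)=-\det\mathsf{M}(\lambda)=a(\lambda)d(\lambda/q)$ is a scalar operator, nonzero at $\lambda=\eta_{n}$ under the hypothesis \rf{E-SOV}. The identity $\mathsf{\bar{T}}(\eta_{n})\mathsf{\bar{T}}(\eta_{n}/q)=\det\mathsf{\bar{M}}(\eta_{n})$ then follows from Theorem \ref{C:T-eigenstates}: by \rf{I-Functional-eq} both sides act as the scalar $a(\eta_{n})d(\eta_{n}/q)$ on every $\mathsf{\bar{T}}$-eigenstate, and by simplicity of the spectrum these span $\mathcal{R}_{\mathsf{N}}$; alternatively it is a specialization of the transfer-matrix fusion relations, using $a(\eta_{n}/q)=d(\eta_{n})=0$. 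In particular $\mathsf{\bar{T}}(\eta_{n})$ is invertible, with $\mathsf{\bar{T}}(\eta_{n})^{-1}=\mathsf{\bar{T}}(\eta_{n}/q)/\det\mathsf{\bar{M}}(\eta_{n})$, which is the operator that, multiplied by the lower ones, stands on the right of \rf{R-AP-1}.

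For \rf{R-AP-1} I would use the collapse of the local Lax operator at the inhomogeneity point: $\mathsf{L}_{0n}(\lambda/\eta_{n})\big|_{\lambda=\eta_{n}}=\mathsf{L}_{0n}(1)=(q-q^{-1})P_{0n}$, with $P_{0n}$ the permutation of $\mathrm{R}_{0}$ and $\mathrm{R}_{n}$. Inserting this into $\mathsf{M}_{0}(\eta_{n})$ and carrying $P_{0n}$ across the remaining Lax factors by means of $P_{0n}\mathsf{L}_{0m}(\mu)=\mathsf{L}_{nm}(\mu)P_{0n}$ and $P_{0n}Y_{0}=Y_{n}P_{0n}$, then taking $\text{tr}_{0}$ with the insertion $X_{0}\sigma^{x}_{0}$ and repeatedly using the defining formula $\mathsf{\bar{T}}(\lambda)=\text{tr}_{0}(\sigma^{x}_{0}\mathsf{M}_{0}(\lambda))$ to recognize the partial Lax products over the sites $1,\dots,n-1$ as the operators $\mathsf{\bar{T}}(\eta_{1}),\dots,\mathsf{\bar{T}}(\eta_{n-1})$ (one may also argue by induction on $n$), one is led to an identity of the shape $X_{n}=\big[\prod_{b<n}\mathsf{\bar{T}}(\eta_{b})\big]\,\text{tr}_{0}\!\big(\mathsf{M}_{0}(\eta_{n})X_{0}\sigma^{x}_{0}\big)\,\big[\prod_{b\le n}\mathsf{\bar{T}}(\eta_{b})^{-1}\big]$, after which $\prod_{b\le n}\mathsf{\bar{T}}(\eta_{b})^{-1}$ is replaced by $\prod_{b\le n}\mathsf{\bar{T}}(\eta_{b}/q)/\det\mathsf{\bar{M}}(\eta_{b})$ using the first step. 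The only novelty relative to the periodic case is the twist $\sigma^{x}_{0}$ hidden in the trace defining $\mathsf{\bar{T}}$: tracking it through these manipulations (it commutes with everything outside the auxiliary space, and $P_{0n}$ exchanges $\sigma^{x}_{0}$ with $\sigma^{x}_{n}$), together with $(\sigma^{x})^{2}=\mathbb{I}$, is precisely what produces the single insertion $\sigma^{x}_{0}$ in \rf{R-AP-1} and the slight asymmetry of the two products. I expect this bookkeeping to be the main obstacle: one must verify that all the $\sigma^{x}$ factors accumulated while peeling off successive transfer matrices collapse to exactly one $\sigma^{x}_{0}$ inside the trace and to the stated ranges $b<n$ and $b\le n$. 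A useful check is $X=\mathbb{I}$: then, by \rf{R-AP-3}, the right-hand side telescopes to $\prod_{b<n}\mathsf{\bar{T}}(\eta_{b})\mathsf{\bar{T}}(\eta_{b}/q)/\det\mathsf{\bar{M}}(\eta_{b})=\mathbb{I}$.

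Finally, \rf{R-AP-2} I would obtain from \rf{R-AP-1} by a single crossing move on the middle factor. Using the quantum-adjugate / crossing relation expressing $\mathsf{M}_{0}(\lambda)^{-1}$ through the transposed monodromy at the shifted argument $\lambda/q$ (equivalently the quantum determinant identity $\mathsf{M}_{0}(\lambda)\,\sigma^{y}_{0}\mathsf{M}_{0}^{t_{0}}(\lambda/q)\,\sigma^{y}_{0}=\det\mathsf{M}(\lambda)\,\mathbb{I}$), together with $\det\mathsf{M}(\eta_{n})=-\det\mathsf{\bar{M}}(\eta_{n})$ and the inversion relation of the first step, one shows
\[
\text{tr}_{0}\!\big(\mathsf{M}_{0}(\eta_{n})X_{0}\sigma^{x}_{0}\big)\,\frac{\mathsf{\bar{T}}(\eta_{n}/q)}{\det\mathsf{\bar{M}}(\eta_{n})}\;=\;\mathsf{\bar{T}}(\eta_{n})\,\frac{\text{tr}_{0}\!\big(\sigma^{z}_{0}\mathsf{M}_{0}^{t_{0}}(\eta_{n}/q)\sigma^{z}_{0}X_{0}\sigma^{x}_{0}\big)}{\det\mathsf{M}(\eta_{n})},
\]
which is exactly the rearrangement that moves one factor $\mathsf{\bar{T}}(\eta_{n})$ from the right product of \rf{R-AP-1} into the left one, and hence yields \rf{R-AP-2}; the $X=\mathbb{I}$ instance, where $\text{tr}_{0}(\sigma^{z}_{0}\mathsf{M}_{0}^{t_{0}}(\eta_{n}/q)\sigma^{z}_{0}\sigma^{x}_{0})=-\mathsf{\bar{T}}(\eta_{n}/q)$, fixes the signs. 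The only subtlety here, secondary to the one above, is to pin down the precise Pauli-matrix dressing in the quantum-adjugate identity in the present conventions; once these two points are settled, the remainder is a line-by-line transcription of the $XXZ$ reconstruction of \cite{KitMT99}.
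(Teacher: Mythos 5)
Your proposal reaches the right statements but by a genuinely different, and considerably heavier, route than the paper. The paper does not rerun the Kitanine--Maillet--Terras permutation-operator argument with the twist at all: it takes the periodic reconstruction \rf{R-P-1}--\rf{R-P-3} as its only input, specializes \rf{R-P-1} to $X_{0}=\sigma_{0}^{x}$ (so that $\text{tr}_{0}(\mathsf{M}_{0}(\eta_{n})\sigma_{0}^{x})=\mathsf{B}(\eta_{n})+\mathsf{C}(\eta_{n})=\mathsf{\bar{T}}(\eta_{n})$) to obtain the operator identity \rf{s^x_n} expressing $\sigma_{n}^{x}$ through periodic and antiperiodic transfer matrices, derives \rf{R-AP-3} by squaring that identity, and then writes $X_{n}=\prod_{b<n}\sigma_{b}^{x}\,(X_{n}\sigma_{n}^{x})\,\prod_{b\le n}\sigma_{b}^{x}$, substituting the periodic reconstruction for $X_{n}\sigma_{n}^{x}$ and the transfer-matrix expressions \rf{P-s^x_n} for the two $\sigma^{x}$-strings; the products of periodic transfer matrices cancel by \rf{R-P-3}, and \rf{R-AP-1}, \rf{R-AP-2} drop out with no Pauli-matrix bookkeeping whatsoever. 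The observation you already make in your consistency check, namely $\text{tr}_{0}(\mathsf{M}_{0}(\eta_{n})\sigma_{0}^{x})=\mathsf{\bar{T}}(\eta_{n})$, is precisely the paper's entry point: pushed one step further it turns the periodic formula into a closed expression for $\sigma_{n}^{x}$, and the whole proposition becomes a three-line conjugation argument.

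Two concrete issues with your version. First, the step you yourself flag as ``the main obstacle'' --- verifying that the $\sigma^{x}$ factors accumulated while peeling successive twisted transfer matrices off the trace collapse to a single $\sigma_{0}^{x}$ with the stated ranges $b<n$ and $b\le n$ --- is essentially the whole content of \rf{R-AP-1}, and you do not close it; as written, the formula is asserted ``of the shape'' rather than derived. Second, your primary argument for \rf{R-AP-3} (both sides act as the scalar $a(\eta_{n})d(\eta_{n}/q)$ on every $\mathsf{\bar{T}}$-eigenstate, and these span $\mathcal{R}_{\mathsf{N}}$) presupposes diagonalizability of $\mathsf{\bar{T}}(\lambda)$, which Theorem \ref{C:T-eigenstates} does not provide for generic complex inhomogeneities: the paper infers completeness of the eigenbasis only from normality, i.e.\ in the restricted regimes of the lemma of Section 2. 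Your fallback via fusion, or the paper's purely algebraic derivation from $(\sigma_{n}^{x})^{2}=\mathbb{I}$ combined with \rf{R-P-3}, is the safe version and should be the one you use. Your crossing-relation derivation of \rf{R-AP-2} from \rf{R-AP-1} is fine and is essentially equivalent to the paper's direct use of \rf{R-P-2} on $\bar{X}_{n}=\sigma_{n}^{x}X_{n}$, up to the cyclic rearrangement $\text{tr}_{0}(\sigma_{0}^{y}\mathsf{M}_{0}^{t_{0}}(\eta_{n}/q)\sigma_{0}^{y}\sigma_{0}^{x}X_{0})=\text{tr}_{0}(\sigma_{0}^{z}\mathsf{M}_{0}^{t_{0}}(\eta_{n}/q)\sigma_{0}^{z}X_{0}\sigma_{0}^{x})$ that produces the $\sigma^{z}$ dressing.
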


\begin{proof}
Here, we present a proof based on the following known \cite{KitMT99}
reconstruction in terms of the periodic transfer matrix:%
\begin{eqnarray}
X_{n} &=&\prod_{b=1}^{n-1}\mathsf{T}(\eta _{b})\text{tr}_{0}(\mathsf{M}%
_{0}(\eta _{n})X_{0})\prod_{b=1}^{n}\frac{\mathsf{T}(\eta _{b}/q)}{\det 
\mathsf{M}(\eta _{b})}  \label{R-P-1} \\
&=&\prod_{b=1}^{n}\mathsf{T}(\eta _{b})\frac{\text{tr}_{0}(\sigma _{0}^{y}%
\mathsf{M}_{0}^{t_{0}}(\eta _{n}/q)\sigma _{0}^{y}X_{0})}{\det \mathsf{M}%
(\eta _{n})}\prod_{b=1}^{n-1}\frac{\mathsf{T}(\eta _{b}/q)}{\det \mathsf{M}%
(\eta _{b})},  \label{R-P-2}
\end{eqnarray}%
where it holds:%
\begin{equation}
\mathsf{T}(\eta _{b})\mathsf{T}(\eta _{b}/q)=\det {}\mathsf{M}(\eta _{b})=%
\mathsf{A}(\eta _{b})\mathsf{D}(\eta _{b}/q)-\mathsf{B}(\eta _{b})\mathsf{C}%
(\eta _{b}/q).  \label{R-P-3}
\end{equation}%
From the above formulae it holds:%
\begin{equation}
\sigma _{n}^{x}\overset{(\ref{R-P-1})}{=}\prod_{b=1}^{n-1}\mathsf{T}(\eta
_{b})\mathsf{\bar{T}}(\eta _{n})\prod_{b=1}^{n}\frac{\mathsf{T}(\eta _{b}/q)%
}{\det \mathsf{M}(\eta _{b})}\overset{(\ref{R-P-2})}{=}-\prod_{b=1}^{n}\frac{%
\mathsf{T}(\eta _{b})}{\det \mathsf{M}(\eta _{b})}\mathsf{\bar{T}}(\eta
_{n}/q)\prod_{b=1}^{n-1}\mathsf{T}(\eta _{b}/q),  \label{s^x_n}
\end{equation}%
from which:%
\begin{equation}
1=\sigma _{n}^{x}\sigma _{n}^{x}=-\prod_{b=1}^{n-1}\mathsf{T}(\eta _{b})%
\frac{\mathsf{\bar{T}}(\eta _{n})\mathsf{\bar{T}}(\eta _{n}/q)}{\det \mathsf{%
M}(\eta _{n})}\prod_{b=1}^{n-1}\frac{\mathsf{T}(\eta _{b}/q)}{\det \mathsf{M}%
(\eta _{b})},
\end{equation}%
which implies $(\ref{R-AP-3})$ thanks to $(\ref{R-P-3})$. Moreover, we can
use $(\ref{s^x_n})$ to write:%
\begin{equation}
\prod_{b=1}^{c-1}\sigma _{b}^{x}=\prod_{b=1}^{c-1}\mathsf{\bar{T}}(\eta
_{b})\prod_{b=1}^{c-1}\frac{\mathsf{T}(\eta _{b}/q)}{\det \mathsf{M}(\eta
_{b})}=\prod_{b=1}^{c-1}\frac{\mathsf{T}(\eta _{b})}{\det \mathsf{\bar{M}}%
(\eta _{b})}\prod_{b=1}^{c-1}\mathsf{\bar{T}}(\eta _{n}/q),  \label{P-s^x_n}
\end{equation}%
then the result $(\ref{R-AP-1})$ follows by computing:%
\begin{equation}
X_{n}=\prod_{b=1}^{n-1}\sigma _{b}^{x}\tilde{X}_{n}\prod_{b=1}^{n}\sigma
_{b}^{x}\text{ \ \ with \ \ }\tilde{X}_{n}=X_{n}\sigma _{n}^{x}
\end{equation}%
by using for $\tilde{X}_{n}$ the reconstruction $(\ref{R-P-1})$ and for the
first product of $\sigma _{b}^{x}$ the first reconstruction in $(\ref%
{P-s^x_n})$ while for the second product of $\sigma _{b}^{x}$ the
second reconstruction in $(\ref{P-s^x_n})$. Similarly, the result $(\ref{R-AP-2})$
follows by computing:%
\begin{equation}
X_{n}=\prod_{b=1}^{n}\sigma _{b}^{x}\bar{X}_{n}\prod_{b=1}^{n-1}\sigma
_{b}^{x}\text{ \ \ with \ \ }\bar{X}_{n}=\sigma _{n}^{x}X_{n},
\end{equation}%
by using for $\bar{X}_{n}$ the reconstruction $(\ref{R-P-2})$.
\end{proof}

\section{Form factors of the local operators}
\subsection{Preliminary comments}

In the following we will compute the matrix elements (form factors):%
\begin{equation}
\langle t|O_{n}|t^{\prime }\rangle   \label{General-ME}
\end{equation}
which by definition are the action of the transfer matrix eigencovector $%
\langle t|\in \mathcal{L}_{\mathsf{N}}$ on the vector obtained by the action
of some local spin operator $O_{n}$ on the transfer matrix eigenvector $%
|t^{\prime }\rangle \in \mathcal{R}_{\mathsf{N}}$, where $|t^{\prime }\rangle$ and $\langle t|$ are
defined in \rf{eigenT-r-D}  and  \rf{eigenT-l-D}. As explicitly stated in Theorem \ref{C:T-eigenstates}, these states are by
definition characterized up to an overall normalization, then it is worth
pointing out that these normalizations do not lead to limitations in the use
of the form factors $\left( \ref{General-ME}\right) $ to expand m-point
functions like:
\begin{equation}
\frac{\langle t|O_{n_{1}}\cdots O_{n_{\text{m}}}|t\rangle }{\langle t|t\rangle }
,  \label{General-n-point-F}
\end{equation}%
where we are denoting with $\langle t|t\rangle $ the action of the covector $\langle t|$ on the vector $|t\rangle $ as defined in \rf{scalar-product-general} and as well as with $\langle t|O_{n_{1}}\cdots O_{n_{\text{m}}}|t\rangle$ the action of the covector $%
\langle t|$ on the vector $O_{n_{1}}\cdots O_{n_{\text{m}}}|t\rangle$. Indeed, it is enough
to remark that the m-point functions of the type $\left( \ref%
{General-n-point-F}\right) $ are normalization independent and moreover the
following formula:%
\begin{equation}\label{Id-decomp}
\mathbb{I=}\sum_{t(\lambda )\in \sum_{\mathsf{T}}}\frac{|t\rangle \langle t|%
}{\langle t|t\rangle },
\end{equation}%
is a well defined decomposition of the identity from the diagonalizability
and simplicity of the transfer matrix spectrum. Then, we can write:
\begin{equation}\label{FF-expansion}
\frac{\langle t|O_{n_{1}}\cdots O_{n_{\text{m}}}|t\rangle }{\langle t|t\rangle }%
=\sum_{t_{1}(\lambda ),...,t_{\text{m}-1}(\lambda )\in \sum_{\mathsf{T}}}%
\frac{\langle t|O_{n_{1}}|t_{1}\rangle \langle t_{\text{m}-1}|O_{n_{\text{m}%
}}|t\rangle \prod_{a=2}^{\text{m}-1}\langle t_{a-1}|O_{n_{a}}|t_{a}\rangle }{%
\langle t|t\rangle \prod_{a=1}^{\text{m}-1}\langle t_{a}|t_{a}\rangle },
\end{equation}%
where in the r.h.s there are the matrix elements that we compute in this
paper.

In the representations which defines a normal transfer matrix $\mathsf{\bar{T%
}}\left( \lambda \right) $ it is worth remarking that the issue of the
relative normalization between eigencovector and eigenvector of $\mathsf{%
\bar{T}}\left( \lambda \right) $ becomes more important. Indeed, taken the
generic eigenvector $|t\rangle $\ then the covector \underline{$\langle t|$}$%
\equiv \left( |t\rangle \right) ^{\dagger }$, dual to $|t\rangle $ w.r.t.
the Hermitian conjugation $^{\dagger }$, is itself an eigencovector of $%
\mathsf{\bar{T}}\left( \lambda \right) $ which for the simplicity of $%
\mathsf{\bar{T}}$-spectrum implies the following
identity \underline{$\langle t|$}$\equiv \alpha _{t}\langle t|$, where $%
\langle t|$ is the eigencovector defined in \rf{eigenT-l-D}. Of course, in these representations the following identities hold:
\begin{equation}
\frac{\langle t|O_{n_{1}}\cdots O_{n_{\text{m}}}|t\rangle }{\langle t|t\rangle }%
=\frac{\underline{\langle t|}O_{n_{1}}\cdots O_{n_{\text{m}}}|t\rangle }{%
\left\Vert |t\rangle \right\Vert ^{2}}
\end{equation}
where $\left\Vert |t\rangle
\right\Vert$ is the positive norm of the eigenvector $|t\rangle$ in the Hilbert space $\mathcal{R}_{\mathsf{N}}$ w.r.t. the scalar product introduced in Section \ref{v-scalar-product}. Then the form factor expansion defined in \rf{FF-expansion} can be used as well to compute the m-point functions for the standard definition in the Hilbert space $\mathcal{R}_{\mathsf{N}}$.

Let us finally comment that from the above discussion emerges clearly the relevance to compute explicitly the norm of the transfer matrix eigenvectors $|t\rangle $, defined in \rf{eigenT-r-D}, as it allows to fix the relative normalization $\alpha _{t}$ of left and right transfer matrix eigenstates thanks to the identity $\alpha _{t}=\left\Vert |t\rangle
\right\Vert ^{2}/\langle t|t\rangle $ in this way making possible to take these left and right states as one the exact dual of the other; this interesting issue is currently under analysis.

\subsection{Results}
Here, we present the main results of the present paper:

\begin{theorem}
\label{FF-Prop1}Let $\langle t|$ and $|t^{\prime }\rangle $ be a left and a right eigenstate of the transfer matrix $\mathsf{\bar{T}}(\lambda )$, respectively, then it
holds:%
\begin{equation}
\langle t|\sigma
_{n}^{-}|t^{\prime }\rangle =\frac{\prod_{h=1}^{n-1}t(\eta
_{h})\prod_{h=1}^{n}t^{\prime }(\eta _{h}/q)}{\prod_{h=1}^{n}a(\eta _{h})d(\eta _{h}/q)}\det_{\mathsf{N}+1}(||\mathcal{S}%
_{a,b}^{\left( -,t,t^{\prime }\right) }||)
\end{equation}%
where $||\mathcal{S}_{a,b}^{\left( -,t,t^{\prime }\right) }||$ is the $(%
\mathsf{N}+1)\times (\mathsf{N}+1)$ matrix:%
\begin{eqnarray}
\mathcal{S}_{a,b}^{\left( -,t,t^{\prime }\right) } &\equiv &\Phi
_{a,b+1/2}^{\left( t,t^{\prime }\right) }\text{ \ for \ }a\in \{1,...,%
\mathsf{N}\}, \\
\mathcal{S}_{\mathsf{N}+1,b}^{\left( -,t,t^{\prime }\right) } &\equiv
&\left( \eta _{n}\right) ^{2(b-1)-\mathsf{N}}.
\end{eqnarray}
\end{theorem}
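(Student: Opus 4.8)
The plan is to reduce the matrix element $\langle t|\sigma_n^-|t'\rangle$ to a scalar product of separate states, to which the determinant formula \rf{scalar-product-general} (or rather its corollary-style refinement used for orthogonality) can be applied. First I would use the reconstruction of $\sigma_n^-$ in terms of the antiperiodic transfer matrix. Note that $\sigma_n^- = \sigma_n^x P_n^-$ for a suitable combination, or more directly, one writes $\sigma_n^- $ via the monodromy entries: applying the reconstruction Proposition with $X_0 = \sigma_0^-$ one gets $\sigma_n^-$ as a product $\prod_{b=1}^{n-1}\mathsf{\bar T}(\eta_b)\,\mathrm{tr}_0(\mathsf{M}_0(\eta_n)\sigma_0^-\sigma_0^x)\prod_{b=1}^{n}\mathsf{\bar T}(\eta_b/q)/\det\mathsf{\bar M}(\eta_b)$. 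Since $\mathrm{tr}_0(\mathsf{M}_0(\eta_n)\sigma_0^-\sigma_0^x)$ is a single monodromy entry — $\mathsf{A}(\eta_n)$ or $\mathsf{D}(\eta_n)$ up to normalization — acting with the surrounding transfer matrices on the eigenstates $\langle t|$ and $|t'\rangle$ simply pulls out the scalar prefactor $\prod_{h=1}^{n-1}t(\eta_h)\prod_{h=1}^{n}t'(\eta_h/q)\big/\prod_{h=1}^n \det\mathsf{\bar M}(\eta_h)$, and $\det\mathsf{\bar M}(\eta_h)=a(\eta_h)d(\eta_h/q)$ gives the stated denominator. This leaves $\langle t|\,\mathsf{A}(\eta_n)\,|t'\rangle$ (up to the scalar already extracted) to be computed.

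Second, I would compute $\langle t|\mathsf{A}(\eta_n)|t'\rangle$ using the SOV representation. The operator $\mathsf{A}(\lambda)$ acts on the $\mathsf{D}$-eigenbasis by the quantum determinant relation $\mathsf{A}(\lambda)=\big(d(\lambda)a(\lambda/q)\cdot(\text{something})+\mathsf{B}(\lambda)\mathsf{C}(\lambda/q)\big)/\mathsf{D}(\lambda/q)$ — more precisely, from $\det\mathsf{M}(\lambda)=\mathsf{A}(\lambda)\mathsf{D}(\lambda/q)-\mathsf{B}(\lambda)\mathsf{C}(\lambda/q)=-a(\lambda)d(\lambda/q)$ one expresses $\mathsf{A}(\eta_n)$ (noting $d(\eta_n/q)$ behaves well, $d(\eta_n)=0$) in terms of the shift operators $\mathsf{T}_a^\pm$. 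The upshot is that $\mathsf{A}(\eta_n)$, acting between separate states, produces a sum over $\mathbf{h}$ of the Vandermonde times the product of coefficients, but with one extra row or column inserted because $\mathsf{A}(\eta_n)$ effectively shifts the $n$-th factor and introduces an inhomogeneous term. Concretely the coefficients $\alpha_a,\beta_a$ get replaced: for $a\neq n$ by $\bar Q_t(\eta_a q^{-h_a})$ and $Q_{t'}(\eta_a q^{-h_a})$ as before, and the $n$-dependence together with the half-integer shift in the exponent produces the entries $\Phi_{a,b+1/2}^{(t,t')}$, while the extra $(\mathsf{N}+1)$-th row $(\eta_n)^{2(b-1)-\mathsf{N}}$ encodes the Laurent-polynomial constraint coming from the fact that $\mathsf{A}(\eta_n)$ is a Laurent polynomial of the appropriate degree (analogous to the vector $\mathrm V^{(t,t')}$ used in the orthogonality Corollary). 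I would make this precise by writing $\mathsf{A}(\eta_n)|t'\rangle$ as a separate state with one "defect" and then expanding the resulting $\mathsf{N}$-fold sum via multilinearity of the determinant, the extra summation variable being absorbed into the $(\mathsf{N}+1)$-st dimension by Laplace expansion.

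The main obstacle I anticipate is bookkeeping the half-integer shift $b\mapsto b+1/2$ and the precise form of the bordering row: one must track carefully how the quantum-determinant expression for $\mathsf{A}(\eta_n)$ in the SOV basis interacts with the Vandermonde factor $\prod_{b<a}(\eta_a^2 q^{-2h_a}-\eta_b^2 q^{-2h_b})$ and with $\omega(\eta_b q^{h_b})=(\eta_b q^{h_b})^{\mathsf{N}-1}$, since these together determine whether the additional degree of freedom appears as an extra row (constraint) or an extra column (new data). I expect the cleanest route is to mimic the proof of the orthogonality Corollary: recognize that $\sum_{c=0}^1(\text{coeff})\,(\text{shift in }t\text{ or }a,d)$ telescopes via the discrete Baxter equations \rf{t-Q-relation}--\rf{t-Qbar-relation}, so that the "missing" equation at site $n$ (where the homogeneous $2\times2$ system is replaced by an inhomogeneous one because $\mathsf{A}(\eta_n)$ rather than $\mathsf{\bar T}(\eta_n)$ acts) is exactly what the $(\mathsf{N}+1)$-st row restores. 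Verifying the overall sign and the exponent $2(b-1)-\mathsf{N}$ in $\mathcal{S}_{\mathsf{N}+1,b}$ will require a short but careful computation with the explicit $a(\lambda),d(\lambda),\omega(\eta)$.
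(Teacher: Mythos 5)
Your first paragraph matches the paper's opening move: reconstruct $\sigma _{n}^{-}$ through the antiperiodic transfer matrix, peel off the prefactor $\prod_{h=1}^{n-1}t(\eta _{h})\prod_{h=1}^{n}t^{\prime }(\eta _{h}/q)\big/\prod_{h=1}^{n}a(\eta _{h})d(\eta _{h}/q)$ using $\det \mathsf{\bar{M}}(\eta _{b})=a(\eta _{b})d(\eta _{b}/q)$, and evaluate one remaining monodromy entry in the SOV basis. The gap is in the identification of that entry and, as a consequence, in the entire mechanism you propose for producing the bordered determinant. A direct computation gives $\sigma _{0}^{-}\sigma _{0}^{x}=\left( \begin{smallmatrix}0&0\\0&1\end{smallmatrix}\right) $, so $\text{tr}_{0}(\mathsf{M}_{0}(\eta _{n})\sigma _{0}^{-}\sigma _{0}^{x})=\mathsf{D}(\eta _{n})$, \emph{not} $\mathsf{A}(\eta _{n})$, and the paper reduces everything to $\langle t|\mathsf{D}(\eta _{n})|t^{\prime }\rangle $. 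This is not bookkeeping: $\mathsf{D}(\lambda )$ is the operator that is diagonal in the SOV basis and $\eta _{n}$ is one of its operator zeros, so $\mathsf{D}(\eta _{n})$ annihilates every basis vector with $h_{n}=0$ and multiplies those with $h_{n}=1$ by $d_{\text{\textbf{h}}}(\eta _{n})=(q-q^{-1})\prod_{a\neq n}(\eta _{n}q^{h_{a}}/\eta _{a}-\eta _{a}/(\eta _{n}q^{h_{a}}))$. It is exactly this eigenvalue factor that promotes the $\mathsf{N}$-variable Vandermonde in \rf{eigenT-r-D} to the $(\mathsf{N}+1)$-variable Vandermonde $V(\eta _{1}^{2}q^{-2h_{1}},\dots ,\eta _{n}^{2}q^{-2},\dots ,\eta _{\mathsf{N}}^{2}q^{-2h_{\mathsf{N}}},\eta _{n}^{2})$; the extra row $\eta _{n}^{2(b-1)-\mathsf{N}}$ is the row of the new Vandermonde variable $\eta _{n}^{2}$, and the half-integer shift $\Phi _{a,b+1/2}^{(t,t^{\prime })}$ comes from the single leftover power of $\eta _{a}q^{-h_{a}}$ per site absorbed into the matrix elements. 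The constraint $h_{n}=1$ is then released for free (the $h_{n}=0$ terms vanish) and one resums exactly as in the scalar product proposition.

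By contrast, the route you sketch through $\mathsf{A}(\eta _{n})$ would not go through as described. In the $\mathsf{D}$-SOV basis $\mathsf{A}(\lambda )$ is only determined via the quantum determinant, $\mathsf{A}(\lambda )\mathsf{D}(\lambda /q)=\mathsf{B}(\lambda )\mathsf{C}(\lambda /q)-a(\lambda )d(\lambda /q)$, so its action is a double sum of shift operators $\text{T}_{a}^{\pm }$ plus a diagonal piece --- not a single ``defect'' absorbable into one extra row by a telescoping of the Baxter equations as in the orthogonality corollary. Relatedly, the bordering row of $\mathcal{S}^{(-,t,t^{\prime })}$ does not encode a degree constraint on the eigenvalues (that is the role of $\text{V}^{(t,t^{\prime })}$ in the corollary); it is the Vandermonde row attached to $\eta _{n}^{2}$. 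The missing idea is simply to choose the reconstruction so that the local operator becomes the SOV-diagonal generator evaluated at one of its operator zeros.
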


\begin{proof}
We can compute the action of $\sigma _{n}^{-}$, by using the following
reconstruction:%
\begin{equation}
\sigma _{n}^{-}=\prod_{b=1}^{n-1}\mathsf{\bar{T}}(\eta _{b})\mathsf{D}(\eta
_{n})\prod_{b=1}^{n}\frac{\mathsf{\bar{T}}(\eta _{b}/q)}{\det \mathsf{\bar{M}%
}(\eta _{b})},
\end{equation}%
so it holds:%
\begin{equation}
\langle t|\sigma _{n}^{-}|t^{\prime }\rangle =\frac{\prod_{h=1}^{n-1}t(\eta
_{h})\prod_{h=1}^{n}t^{\prime }(\eta _{h}/q)}{\prod_{h=1}^{n}a(\eta
_{h})d(\eta _{h}/q)}\langle t|\mathsf{D}(\eta _{n})|t^{\prime }\rangle .
\end{equation}%
Now from the right $\mathsf{D}$-SOV representation, we have:%
\begin{eqnarray}
\mathsf{D}(\eta _{n})|t^{\prime }\rangle  &=&\sum_{\substack{ h_{1},...,h_{%
\mathsf{N}}=0 \\ \overbrace{{\small h_{n}\text{ is missing}}}}}^{1}\frac{Q_{t^{\prime
}}(\eta _{n}/q)}{\omega(\eta _{n}/q)}(q-\frac{1}{q})\prod_{a\neq
n,a=1}^{\mathsf{N}}\left[ (\frac{\eta _{n}q^{h_{a}}}{\eta _{a}}-\frac{\eta
_{a}}{\eta _{n}q^{h_{a}}})\frac{Q_{t^{\prime }}(\eta _{a}q^{-h_{a}})}{\omega(\eta _{a}q^{-h_{a}})}\right]   \notag \\
&&\times \prod_{1\leq b<a\leq \mathsf{N}}( \eta _{a}^2q^{-2h_{a}}-\eta _{b}^{2}q^{-2h_{b}})|h_{1},...,h_{n}\left. =\right.
1,...,h_{\mathsf{N}}\rangle ,
\end{eqnarray}%
and we can rewrite the coefficient as:%
\begin{equation}
\frac{Q_{t^{\prime }}(\eta _{n}/q)\,q}{\omega(\eta _{n}/q)\eta _{n}^{\mathsf{N+1}}}\prod_{a\neq n,a=1}^{\mathsf{N}}\frac{Q_{t^{\prime }}(\eta
_{a}q^{-h_{a}})q^{h_{a}}}{\omega(\eta _{a}q^{-h_{a}})\eta _{a}}V(\eta
_{1}^{2}/q^{2h_{1}},....,\eta _{n}^{2}/q^{2},....,\eta _{\mathsf{N}%
}^{2}/q^{2h_{\mathsf{N}}},\eta _{n}^{2}),
\end{equation}%
where $V()$ is the determinant of the $(\mathsf{N}+1)\times (\mathsf{N}+1)$
Vandermonde matrix in $\eta _{1}^{2}/q^{2h_{1}},....,\eta
_{n}^{2}/q^{2},....,\eta _{\mathsf{N}}^{2}/q^{2h_{\mathsf{N}}},\eta _{n}^{2}$%
. Now resumming as for the scalar product formula we derive our result. We
have just to notice that the determinant which has in the line $n$ the $\eta
_{n}^{2(b-1)}$ gives zero and so we can add it to derive our result.
\end{proof}

\begin{theorem}
\label{FF-Prop2}Let $\langle t|$ and $|t^{\prime }\rangle $ be a left and a right eigenstate of the transfer matrix $\mathsf{\bar{T}}(\lambda )$, respectively, then it
holds:%
\begin{equation}
\langle t|\sigma _{n}^{z}|t^{\prime }\rangle =\frac{\prod_{h=1}^{n-1}t(\eta
_{h})\prod_{h=1}^{n}t^{\prime }(\eta _{h}/q)}{-\prod_{h=1}^{n}a(\eta
_{h})d(\eta _{h}/q)}\det_{\mathsf{N}+1}(||\mathcal{S}_{a,b}^{\left(
z,t,t^{\prime }\right) }||)
\end{equation}%
where $||\mathcal{S}_{a,b}^{\left( z,t,t^{\prime }\right) }||$ is the $(%
\mathsf{N}+1)\times (\mathsf{N}+1)$ matrix:%
\begin{eqnarray}
\mathcal{S}_{a,b}^{\left( z,t,t^{\prime }\right) } &\equiv &\Phi
_{a,b}^{\left( t,t^{\prime }\right) }\text{ \ for \ }a\in \{1,...,\mathsf{N}%
\},\text{ \ \ \ }b\in \{1,...,\mathsf{N}\} \\
\mathcal{S}_{\mathsf{N}+1,b}^{\left( z,t,t^{\prime }\right) } &\equiv &\eta
_{n}^{2(b-1/2)-\mathsf{N}}\text{ \ \ \ for }b\in \{1,...,\mathsf{N}\} \\
\mathcal{S}_{a,\mathsf{N}+1}^{\left( z,t,t^{\prime }\right) } &\equiv &\frac{%
Q_{t^{\prime }}(\eta _{a}/q)\bar{Q}_{t}(\eta _{a})}{\omega(\eta _{a}/q)}%
\left( \frac{\eta _{a}}{q}\right) ^{\mathsf{N}-1}d(\eta _{a}/q)\text{ \ for
\ }a\in \{1,...,\mathsf{N}\}, \\
\mathcal{S}_{\mathsf{N}+1,\mathsf{N}+1}^{\left( z,t,t^{\prime }\right) }
&\equiv &1/2.
\end{eqnarray}
\end{theorem}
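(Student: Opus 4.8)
The plan is to run the argument of Theorem~\ref{FF-Prop1} with the operator that reconstructs $\sigma_{n}^{z}$ for the antiperiodic chain in place of the simple operator $\mathsf{D}(\eta_{n})$ used there. Specialising the reconstruction (\ref{R-AP-1}) to $X_{0}=\sigma_{0}^{z}$ and computing the auxiliary-space trace $\text{tr}_{0}(\mathsf{M}_{0}(\eta_{n})\sigma_{0}^{z}\sigma_{0}^{x})=\mathsf{C}(\eta_{n})-\mathsf{B}(\eta_{n})$, one obtains
\[
\sigma_{n}^{z}=\prod_{b=1}^{n-1}\mathsf{\bar{T}}(\eta_{b})\,\bigl(\mathsf{C}(\eta_{n})-\mathsf{B}(\eta_{n})\bigr)\,\prod_{b=1}^{n}\frac{\mathsf{\bar{T}}(\eta_{b}/q)}{\det\mathsf{\bar{M}}(\eta_{b})}.
\]
Acting on $\langle t|$ and $|t^{\prime}\rangle$ and using $\langle t|\mathsf{\bar{T}}(\eta_{b})=t(\eta_{b})\langle t|$, $\mathsf{\bar{T}}(\eta_{b}/q)|t^{\prime}\rangle=t^{\prime}(\eta_{b}/q)|t^{\prime}\rangle$ together with $\det\mathsf{\bar{M}}(\eta_{b})=a(\eta_{b})d(\eta_{b}/q)$ from (\ref{R-AP-3}), this reduces the statement to proving
\[
\langle t|\bigl(\mathsf{C}(\eta_{n})-\mathsf{B}(\eta_{n})\bigr)|t^{\prime}\rangle=-\det{}_{\mathsf{N}+1}\bigl(\|\mathcal{S}_{a,b}^{(z,t,t^{\prime})}\|\bigr).
\]

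To evaluate the left-hand side I would act with $\mathsf{C}(\eta_{n})-\mathsf{B}(\eta_{n})$ on $|t^{\prime}\rangle$ written in the right $\mathsf{D}$-SOV basis (\ref{eigenT-r-D}), using the interpolation formulae (\ref{C-SOV_D-right}) and (\ref{B-SOV_D-right}) at $\lambda=\eta_{n}$. On the $h_{n}=0$ sector the action collapses to the single term $-a(\eta_{n})\,\text{T}_{n}^{+}$, since $\mathsf{C}(\eta_{n})$ annihilates the $h_{n}=0$ states while $\mathsf{B}(\eta_{n})|\dots,h_{n}=0,\dots\rangle=a(\eta_{n})|\dots,h_{n}=1,\dots\rangle$; inserting the coefficients of $|t^{\prime}\rangle$ and using $\omega(\eta)=\eta^{\mathsf{N}-1}$, the Vandermonde $\prod_{b<a}(\eta_{a}^{2}q^{-2h_{a}}-\eta_{b}^{2}q^{-2h_{b}})$ combines with the remaining factors into the $(\mathsf{N}+1)$-variable Vandermonde $V(\eta_{1}^{2}/q^{2h_{1}},\dots,\eta_{n}^{2}/q^{2},\dots,\eta_{\mathsf{N}}^{2}/q^{2h_{\mathsf{N}}},\eta_{n}^{2})$ with $\eta_{n}^{2}$ as an extra node, exactly as in the $\sigma_{n}^{-}$ case. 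On the $h_{n}=1$ sector the non-diagonal shift terms of $\mathsf{C}(\eta_{n})$ and $\mathsf{B}(\eta_{n})$ survive; writing $\mathsf{C}-\mathsf{B}=2\mathsf{C}-\mathsf{\bar{T}}$ and using that the coefficients of $|t^{\prime}\rangle$ and (after pairing) of $\langle t|$ obey the discrete Baxter equations (\ref{t-Q-relation})--(\ref{t-Qbar-relation}), I expect these terms to re-sum into the bordering column $\mathcal{S}_{a,\mathsf{N}+1}^{(z,t,t^{\prime})}=Q_{t^{\prime}}(\eta_{a}/q)\bar{Q}_{t}(\eta_{a})\,d(\eta_{a}/q)$, with the $-\mathsf{\bar{T}}(\eta_{n})$ piece producing the corner entry $\mathcal{S}_{\mathsf{N}+1,\mathsf{N}+1}^{(z,t,t^{\prime})}=1/2$ (and, for $t=t^{\prime}$, the $-\langle t|t\rangle$ contribution, which vanishes automatically for $t\neq t^{\prime}$ by orthogonality of eigenstates with different eigenvalues). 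Resumming over the $h_{a}$ with $a\neq n$ by multilinearity of the determinant in its rows, as in the derivation of (\ref{scalar-product-general}), then turns the rows attached to the inhomogeneities into $\mathcal{S}_{a,b}^{(z,t,t^{\prime})}=\Phi_{a,b}^{(t,t^{\prime})}$ and the row attached to the extra node $\eta_{n}^{2}$ into $\mathcal{S}_{\mathsf{N}+1,b}^{(z,t,t^{\prime})}=\eta_{n}^{2(b-1/2)-\mathsf{N}}$; collecting the overall sign gives the asserted formula.

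The main obstacle is the $h_{n}=1$ sector. In Theorem~\ref{FF-Prop1} the reconstructing operator $\mathsf{D}(\eta_{n})$ is diagonal in the SOV basis and vanishes on $h_{n}=0$, so exactly one clean term is produced and the enlarged determinant acquires only one extra (monomial) row. Here neither $\mathsf{C}(\eta_{n})$ nor $\mathsf{B}(\eta_{n})$ is diagonal, and the genuinely new work is to verify that, once the $Q$-function Baxter relations are applied, the off-diagonal contributions assemble into precisely one additional row \emph{and} one additional column of the $(\mathsf{N}+1)\times(\mathsf{N}+1)$ matrix, together with the entry $1/2$, rather than destroying the determinant structure. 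As a cross-check one may instead write $\sigma_{n}^{z}=\mathbb{I}-2\cdot(\text{dressed }\mathsf{B}(\eta_{n}))$, reconstructing $\tfrac12(1-\sigma_{n}^{z})$ from (\ref{R-AP-1}); this must yield the same answer and provides a useful test on the $h_{n}=1$ resummation and on the sign.
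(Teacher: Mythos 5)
Your proposal takes essentially the same route as the paper's proof: reconstruct $\sigma_n^z$ as the dressed $\mathsf{C}(\eta_n)-\mathsf{B}(\eta_n)$, use $\mathsf{\bar{T}}(\eta_n)=\mathsf{B}(\eta_n)+\mathsf{C}(\eta_n)$ to reduce everything to $\langle t|\mathsf{C}(\eta_n)|t'\rangle$ plus a multiple of $\langle t|t'\rangle$, evaluate $\mathsf{C}(\eta_n)|t'\rangle$ from \rf{C-SOV_D-left}--\rf{C-SOV_D-right} and resum by multilinearity of the determinant into the bordered $(\mathsf{N}+1)\times(\mathsf{N}+1)$ matrix --- the paper carries out explicitly the $h_n=1$ resummation that you only describe as ``expected'', and it does assemble into exactly one extra row and one extra column as you anticipate. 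The one bookkeeping point to watch is the factor of two: the paper's reconstruction is the dressed $\left(\mathsf{C}(\eta_n)-\mathsf{B}(\eta_n)\right)/2$ (whence the corner entry $1/2$ and the un-doubled bordering column), whereas your trace computation gives $\mathsf{C}-\mathsf{B}$ without the $1/2$, so with your normalization you would land on twice the stated right-hand side.
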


\begin{proof}
We can compute the action of $\sigma _{n}^{z}$, by using the following
reconstruction:%
\begin{equation}
\sigma _{n}^{z}=\prod_{b=1}^{n-1}\mathsf{\bar{T}}(\eta _{b})\frac{\mathsf{C}%
(\eta _{n})-\mathsf{B}(\eta _{n})}{2}\prod_{b=1}^{n}\frac{\mathsf{\bar{T}}%
(\eta _{b}/q)}{\det \mathsf{\bar{M}}(\eta _{b})},
\end{equation}%
so it holds:%
\begin{equation}
\langle t|\sigma _{n}^{z}|t^{\prime }\rangle =\frac{\prod_{h=1}^{n-1}t(\eta
_{h})\prod_{h=1}^{n}t^{\prime }(\eta _{h}/q)}{\prod_{h=1}^{n}a(\eta
_{h})d(\eta _{h}/q)}\langle t|\mathsf{C}(\eta _{n})|t^{\prime }\rangle -%
\frac{\langle t|t^{\prime }\rangle }{2}.
\end{equation}%
Now from the right $\mathsf{D}$-SOV representation of $\mathsf{C}(\eta _{n})$, we
have:%
\begin{eqnarray}
\mathsf{C}(\eta _{n})|t^{\prime }\rangle  &=&\sum_{a=1}^{\mathsf{N}}\sum
_{\substack{ h_{1},...,h_{\mathsf{N}}=0 \\ \overbrace{h_{n}\text{ is missing}%
}}}^{1}\left\{ \prod_{b\neq a,b=1}^{\mathsf{N}}\left[ \frac{(\eta
_{n}^{2}-\eta _{b}^{2}q^{-2h_{b}})}{(\eta _{a}^{2}q^{-2h_{a}}-\eta
_{b}^{2}q^{-2h_{b}})}\frac{Q_{t^{\prime }}(\eta _{b}q^{-h_{b}})}{\omega
_{b}(\eta _{b}q^{-h_{b}})}\right] \right.   \notag \\
&&\times \left. \prod_{1\leq b<a\leq \mathsf{N}}(\eta^{2}_{a}q^{-2h_{a}}-\eta^{2}_{b}q^{-2h_{b}})\frac{\eta
_{a}^{\mathsf{N}-1}Q_{t^{\prime }}(\eta _{a}q^{-h_{a}})d(\eta _{a}q^{-h_{a}})%
}{\eta _{n}^{\mathsf{N}-1}q^{(\mathsf{N}-1)h_{a}}\omega(\eta
_{a}q^{-h_{a}})}\right\} _{h_{n}=1}  \notag \\
&&\times |h_{1},...,h_{n}\left. =\right. 1-\delta _{a,n},...,h_{a}-1,...,h_{%
\mathsf{N}}\rangle ,
\end{eqnarray}%
and so we can write:%
\begin{eqnarray}
\langle t|\mathsf{C}(\eta _{n})|t^{\prime }\rangle  &=&\sum_{a=1}^{\mathsf{N}%
}(-1)^{\mathsf{N}+a}\sum_{\substack{ h_{1},...,h_{\mathsf{N}}=0 \\ 
\overbrace{h_{n}\text{ is missing}}}}^{1}\left\{ \widehat{V}_{a}(\eta
_{1}^{2}/q^{2h_{1}},....,\eta _{\mathsf{N}}^{2}/q^{2h_{\mathsf{N}}},\eta
_{n}^{2})\right.   \notag \\
&&\times \left. \prod_{b\neq a,b=1}^{\mathsf{N}}\frac{Q_{t^{\prime }}(\eta
_{b}q^{-h_{b}})\bar{Q}_{t}(\eta _{b}q^{-h_{b}})}{\omega(\eta
_{b}q^{-h_{b}})}\frac{\eta _{a}^{\mathsf{N}-1}Q_{t^{\prime }}(\eta
_{a}q^{-h_{a}})\bar{Q}_{t}(\eta _{a}q^{1-h_{a}})d(\eta _{a}q^{-h_{a}})}{\eta
_{n}^{\mathsf{N}-1}q^{(\mathsf{N}-1)h_{a}}\omega(\eta _{a}q^{-h_{a}})}%
\right\} _{h_{n}=1}.
\end{eqnarray}%
In the last sum we can reintroduce the sum over $h_{n}$; indeed, $h_{n}=0$
gives zero. Now, it is trivial to remark that the above sum minus $\langle t|t^{\prime }\rangle/2$ is the develop
of the determinant of the $(\mathsf{N}+1)\times (\mathsf{N}+1)$ matrix $||%
\mathcal{S}_{a,b}^{\left( z,t,t^{\prime }\right) }||$ presented in the
statement of proposition.
\end{proof}

\section{Conclusion and outlooks}

\subsection{Results and first prospectives}

In this article we have considered the spin 1/2 highest weight
representations for the 6-vertex Yang-Baxter algebra on a generic $\mathsf{N}$-sites finite lattice and analyzed the integrable quantum models associated
to the antiperiodic transfer matrix. For this integrable quantum models,
which in the homogeneous limit reproduce the $XXZ$ spin 1/2 quantum chain with
antiperiodic boundary conditions, we have obtained the following results:

\begin{itemize}
\item Complete characterization of the transfer matrix spectrum
(eigenvalues/eigenstates) by separation of variables and proof of its
simplicity.

\item Reconstruction of all local operators in terms of the standard
Sklyanin's quantum separate variables.

\item Scalar Products: One determinant formulae of $\mathsf{N}\times $%
$\mathsf{N}$ matrices whose matrix elements are sums over the spectrum of each
quantum separate variable of the product of the coefficients of states; for
all the left/right separate states in the SOV-basis.

\item Form factors of the local spin operators on the transfer matrix
eigenstates in determinant form.
\end{itemize}

In the papers \cite{GN12-2} and \cite{GN12-3},
 the list of fundamental matrix elements for the antiperiodic $XXZ$ spin-1/2 chain is completed with the the computation of the matrix elements on the transfer matrix eigenstates of the so-called density matrix and the two point functions.

Let us complete this subsection mentioning some facts to point out the
relevance of the current results about form factors of local operators.
First of all by using the decomposition of the identity \rf{Id-decomp} we can write any correlation function in spectral
series of form factors. Then it is natural to expect that the correlation functions can be analyzed
numerically mainly by the same tools developed in \cite{CM05} in the ABA
framework and used in the series of works\footnote{It is worth mentioning that important physical observables like the
dynamical structure factors, accessible by neutron scattering experiments 
\cite{Bloch36}-\cite{Balescu75}, were evaluated by this numerical approach.%
} \cite{CM05}-\cite{CCS07}. Indeed, also in our SOV framework we have
determinant representations of the form factors and complete
characterization of the transfer matrix spectrum in terms of the solutions
of a system of Bethe equations. Finally, let us mention the important progresses\footnote{%
Results on asymptotic behaviour which has been also compared with previous
more technical achievements rely mainly on the Riemann-Hilbert analysis of
related Fredholm determinants \cite{KKMST09++}-\cite{K1011}.} achieved
recently \cite{KKMST09}-\cite{KP12} in computing the asymptotic
behavior of correlation functions which are in principle susceptible to be
extended to any (integrable) quantum model possessing determinant
representations for the form factors of local operators \cite{KKMST1110} and
so also to the models analyzed by our approach in the SOV framework.

\subsection{Comparison with other SOV-type results}

In the literature of quantum integrable models there exist results on the
matrix elements of local operators which can be traced back to some
applications of separation of variable methods. In this section, we try to
recall those that we consider more relevant for us also as they allow an
explicit comparison with our results leading to show an universal picture
emerging in the characterization of matrix elements by SOV-methods.

In the case of the quantum integrable Toda chain \cite{Sk1}, Smirnov \cite%
{Sm98} has derived in Sklyanin's SOV framework determinant formulae for
the matrix elements of a conjectured basis of local operators which look
very similar to our formulae. The main difference is due to
the different nature of the spectrum of the quantum separate variables in
these two models. In fact, in the case of the lattice Toda model Sklyanin's measure is continue (continuum SOV-spectrum) while it is discrete
in our case. The elements of the matrices whose determinants give the form
factor formulae are then expressed as \textquotedblleft
convolutions\textquotedblright , over the spectrum of the separate
variables,\ of Baxter equations solutions plus contributions coming from the
local operators. In the case of the Smirnov's formulae they are true
integral being the SOV-spectrum continuum while in our formulae they are
\textquotedblleft discrete convolutions\textquotedblright\ being the
SOV-spectrum discrete. Let us comment that the need to conjecture\footnote{%
The consistence of this conjecture is there verified by a counting arguments
based on the existence of an appropriate set of null conditions for the
\textquotedblleft integral convolutions\textquotedblright .} the form of a
basis of local operators in \cite{Sm98} is due to the lack of a direct
reconstruction of local operators in terms of Sklyanin's separate
variables\footnote{%
It is then worth citing that simple form of reconstructions of local
operators in the quantum Toda model have been achieved by Babelon in \cite%
{OB-04} in terms of a set of quantum separate variables defined by a change
of variables in terms of the original Sklyanin's quantum variables.}.

Even if in the different methodological contest of the S-matrix formulation
of IQFTs, it is worth mentioning that the form factors of local operators 
\cite{Sm92} of the infinite volume quantum sine-Gordon field theory have
also a form similar to the one predicted by SOV.\ This similarity statement
can be made explicit considering for example the \textsf{n}-soliton form
factors for the chiral local operators in the restricted sine-Gordon at the
reflectionless points, $\beta ^{2}=1/(1+\nu )$ with $\nu \in \mathbb{Z}%
^{\geq 0}$; see formula (31) of \cite{BBS96}. Then, for some local fields
(interpreted as primary operators in \cite{BBS96}) the corresponding form
factors can be easily rewritten as determinants of \textsf{n}$\times$\textsf{%
n} matrices whose elements are integral convolution of \textsf{n}-soliton
wave functions (see the $\psi $-functions (32) of \cite{BBS96}) plus
contributions coming from the local operators. The connection of these
results with SOV emerges somehow naturally in \cite{BBS96} as the form
factors of the quantum theory are there identified semi-classically by using
the classical SOV reconstruction of the local fields \cite{BBS96}.

Finally, about representation of form factors in determinant form, it is
worth pointing out the important achievements obtained recently in the
series of works \cite{BJMST07-02}-\cite{JMS11-03} where a fermionic basis of
quasi-local operators has been introduced. There the expectation values of
products of these operators on an appropriate vacuum state are written in
determinant forms, similarly to the free fermions case by Wick's theorem. In
particular, in \cite{JMS11-03} these results on form factors has been
connected to the form factors analysis made in the S-matrix formulation for
the restricted sine-Gordon at the reflectionless points made in \cite{BBS96,BBS97} where as above reported a link to SOV was made.

\subsection{Outlook}

Here, we want to point out the potential generality of the approach
introduced to compute matrix elements of local operators for quantum
integrable models. The main ingredients used to develop it are the solution
of the transfer matrix spectral problem by SOV
construction, the reconstruction of local operators in terms of the quantum
separate variables and the scalar product formulae for the transfer matrix
eigenstates (and general separate states). Then this SOV reconstruction
(inverse problem solution) allows to write the action of any local operator
on transfer matrix eigenstates as a finite sum of separate states in the
SOV-basis. So that the matrix elements of any local operator are written as a
finite sum of determinants of the scalar product formulae. The emerging
picture is the possibility to apply this method to a whole class of
integrable quantum models which were not tractable with other methods and in
particular by algebraic Bethe ansatz.

The main aim is to implement explicitly this approach for a set of key integrable quantum models providing determinant representations
for the matrix elements of local operators. To achieve this
goal is very important as on the one hand it leads to the solution of fundamental quantum models previously unsolved and on the other hand gives the
possibility to develop the mathematical tools to face the same problem for
more involved integrable quantum models. This program has been already
realized for several integrable quantum models as we will summarize in the
following. In the case of the cyclic representations of the 6-vertex
Yang-Baxter algebra like the lattice sine-Gordon model and the $\tau _{2}$%
-model (of special interest for the connection with chiral Potts model) the
form factors of local operators have been derived in \cite{GMN12-SG} and \cite%
{GMN12-T2}, respectively. In \cite{N12-1} this approach is developed for the
higher spin representations of highest weight type of the rational 6-vertex
Yang-Baxter algebra which in the homogeneous limit leads to the higher spin
$XXX$ antiperiodic quantum chain. There the SOV setup is implemented and the
form factors of the local spin operators on the transfer matrix
eigenstates are obtained in a determinant form. In \cite{N12-2} this
approach is developed for the spin 1/2 representations of highest weight
type of the reflection algebra \cite{Skly88}-\cite{GZ94} which in the
homogeneous limit leads to the $XXZ$ open spin 1/2 quantum chains in quite
general non-diagonal boundary conditions. There the SOV setup is implemented
and the matrix elements of some interesting quasi-local string of local
operators are computed. Further matrix elements are computed in \cite{N12-3}
and in \cite{N12-4} for the most general representations of rational type;
the relevance of these findings in the framework of the non-equilibrium
systems like the simple exclusion processes is also discussed there. In \cite%
{N12-5} the SOV setup of the spectral problem is implemented for the spin
1/2 representations of highest weight type of the dynamical 6-vertex
Yang-Baxter algebra and consequently for the corresponding 8-vertex
Yang-Baxter algebra representations. There moreover the scalar product
formulae for these representations are derived in a determinant form which
allows to derive determinant formulae for the form factors as it will be presented in \cite{?NT12}.

\subsubsection{Toward solution of quantum field theories by  integrable microscopic formulation}

We are interested also in the use of integrable quantum models as a tool for the exact and complete characterization of the spectrum and dynamics of quantum field theories (QFTs) going through integrable lattice discretization. Then, it is worth recalling that in the light-cone approach \cite{DdeV87}-\cite{DdeV93} the transfer matrix of the periodic $XXZ$ spin-1/2 quantum chain with alternating inhomogeneities allows to define an integrable lattice regularization of the massive Thirring model. The continuum limit and the infrared limit defining the massive Thirring QFT in the finite volume and the infinite volume, respectively, have been implemented and their spectrum analyzed in the series of papers \cite{DdeV92}-\cite{Rav01}. Then in the infinite volume limit, where the boundary conditions of the original spin chain should not play a role, we have the possibility to describe the massive Thirring QFT by using as starting point the SOV approach of the present paper which on the lattice is known to give a complete characterization of the spectrum. The analysis of this interesting issue by the implementation of the required limits and the comparison with the known results  \cite{DdeV92}-\cite{Rav01}, derived instead in the framework of the algebraic Bethe ansatz, will be the subject of a forthcoming publication.

Finally, let us point out that our interest in integrable lattice regularizations of QFTs is due to the possibility to define an exact setup where to use the reconstruction of local fields in terms of the
quantum separate variables to overcome the longstanding problem of their
identifications in the S-matrix formulation\footnote{%
See \cite{A.Zam77}-\cite{M92} for a review and references
therein.}. Different methods have been introduced to address this problem
and one important line of research is related to the description of massive  integrable quantum field theories (IQFTs) as (superrenormalizable) perturbations of conformal field theories 
\cite{Vi70}-\cite{DFMS97} by relevant local fields \cite{Zam88}-\cite{GM96}.
This characterization has led to the expectation that the perturbations do
not change the structure of the local fields in this way leading to the
attempt to classify the local field content of massive theories\footnote{%
In the S-matrix formulation the local fields are characterized in terms of
form factors (matrix elements on the asymptotic particle states) and many
results are known on these form factors in IQFTs; see for example \cite{KW78}%
-\cite{D04} and references therein.} by that of the corresponding
ultraviolet conformal field theories. Several results are known which
confirm this characterization; see for example \cite{CM90}-\cite{JMT03} for the proof of the isomorphism restricted to the
chiral sector of some IQFTs\footnote{An important role in these studies has been played by the fermionic
representations of the characters, as derived for different classes of
rational conformal field theories in \cite{KKMcM93-1}-\cite{BMc98}.} and the series of works \cite{DN05-1}-\cite{DN08}
where the first rigorous proof of the isomorphism for the full operator
space was given for the massive Lee-Yang model. However, while these are
important results on the global structure of the operator space in the
S-matrix formulation of the massive IQFTs they do not really lead to the
identification of particular local fields\footnote{%
A part for some local fields, like the components of the stress energy
tensor, which can be characterized by physical prescription \cite{DN05-1}
and \cite{DN06}.} which then remain the main missing information in the
S-matrix formulation.

\bigskip
{\bf Acknowledgments}\, G. N. would like to thank N. Grosjean, N. Kitanine, K. K. Kozlowski, B. M.  McCoy, E. Sklyanin, V. Terras, P. Zinn-Justin for their interest in this work and J. M. Maillet for his interest and attentive reading of this paper. G. N. is supported by National Science Foundation grants PHY-0969739. G. N. gratefully acknowledge the YITP Institute of Stony Brook for the opportunity to develop his research programs and the privilege to have stimulating discussions on the present paper and on related subjects with B. M. McCoy. G. N. would like to thank the Theoretical Physics Group of the Laboratory of Physics at ENS-Lyon and the Mathematical Physics Group at IMB of the Dijon University for their hospitality (under support of the grant ANR-10-BLAN-0120-04-DIADEMS) which made possible many fundamental discussions on this and related subjects with N. Grosjean, N. Kitanine, J. M. Maillet and V. Terras.

\section{Appendix}
In this appendix we report the SOV-representations of the Yang-Baxter generators in the right and left eigenbasis of $\mathsf{A}(\lambda )$.

\begin{theorem}
\textsf{I)} \underline{Left $\mathsf{A}(\lambda )$ SOV-representations:} \
Under the condition $\left( \ref{E-SOV}\right) $, the states:%
\begin{equation}
\langle h_{1},...,h_{\mathsf{N}}||\equiv \frac{1}{\text{\textsc{n}}}\langle
0|\prod_{n=1}^{\mathsf{N}}\left( \frac{\mathsf{C}(\eta _{n}/q)}{d(\eta
_{n}/q)}\right) ^{h_{n}},
\end{equation}%
where\ $h_{n}\in \{0,1\},$ $n\in \{1,...,\mathsf{N}\}$, define a $\mathsf{A}%
(\lambda )$-eigenbasis of $\mathcal{L}_{\mathsf{N}}$:%
\begin{equation}
\langle h_{1},...,h_{\mathsf{N}}||\mathsf{A}(\lambda )=a_{\text{\textbf{h}}%
}(\lambda )\langle h_{1},...,h_{\mathsf{N}}||,
\end{equation}%
where:%
\begin{equation}
d_{\text{\textbf{h}}}(\lambda )\equiv \prod_{n=1}^{\mathsf{N}}(\frac{\lambda
q^{(1-h_{n})}}{\eta _{n}}-\frac{\eta _{n}}{\lambda q^{(1-h_{n})}})\text{ \ \
\ and \ \ \textbf{h}}\equiv (h_{1},...,h_{\mathsf{N}}).
\end{equation}%
The action of the remaining Yang-Baxter generators on the generic state $%
\langle h_{1},...,h_{\mathsf{N}}||$ reads:%
\begin{eqnarray}
\langle h_{1},...,h_{\mathsf{N}}||\mathsf{C}(\lambda ) &=&\sum_{a=1}^{\mathsf{N}}\prod_{b\neq a}\frac{\lambda
q^{(1-h_{b})}/\eta _{b}-\eta _{b}/q^{(1-h_{b})}\lambda }{\eta
_{a}q^{(h_{a}-h_{b})}/\eta _{b}-\eta _{b}/q^{(h_{a}-h_{b})}\eta _{a}}d(\eta
_{a}q^{h_{a}-1})\langle h_{1},...,h_{\mathsf{N}}||\text{T}_{a}^{+}, \\
&&  \notag \\
\langle h_{1},...,h_{\mathsf{N}}||\mathsf{B}(\lambda ) &=&\sum_{a=1}^{\mathsf{N}}\prod_{b\neq a}\frac{\lambda
q^{(1-h_{b})}/\eta _{b}-\eta _{b}/q^{(1-h_{b})}\lambda }{\eta
_{a}q^{(h_{a}-h_{b})}/\eta _{b}-\eta _{b}/q^{(h_{a}-h_{b})}\eta _{a}}a(\eta
_{a}q^{h_{a}-1})\langle h_{1},...,h_{\mathsf{N}}||\text{T}_{a}^{-}.
\end{eqnarray}%
where:%
\begin{equation}
\langle h_{1},...,h_{a},...,h_{\mathsf{N}}||\text{T}_{a}^{\pm }=\langle
h_{1},...,h_{a}\pm 1,...,h_{\mathsf{N}}||.
\end{equation}%
Finally, $\mathsf{D}(\lambda )$ is uniquely defined by the quantum
determinant relation.\smallskip

\textsf{II)} \underline{Right $\mathsf{A}(\lambda )$ SOV-representations:} \
Under the condition $\left( \ref{E-SOV}\right) $, the states:%
\begin{equation}
||h_{1},...,h_{\mathsf{N}}\rangle \equiv \frac{1}{\text{\textsc{n}}}%
\prod_{n=1}^{\mathsf{N}}\left( \frac{\mathsf{B}(\eta _{n}/q)}{a(\eta _{n})}%
\right) ^{h_{n}}|0\rangle ,
\end{equation}%
where\ $h_{n}\in \{0,1\},$ $n\in \{1,...,\mathsf{N}\}$, define a $\mathsf{A}%
(\lambda )$-eigenbasis of $\mathcal{R}_{\mathsf{N}}$:%
\begin{equation}
\mathsf{A}(\lambda )||h_{1},...,h_{\mathsf{N}}\rangle =a_{\text{\textbf{h}}%
}(\lambda )|h_{1},...,h_{\mathsf{N}}\rangle .
\end{equation}%
The action of the remaining Yang-Baxter generators on the generic state $%
||h_{1},...,h_{\mathsf{N}}\rangle $ reads:%
\begin{eqnarray}
\mathsf{C}(\lambda )||h_{1},...,h_{\mathsf{N}}\rangle &=&\sum_{a=1}^{\mathsf{N}}\text{T}_{a}^{-}||h_{1},...,h_{\mathsf{N}}\rangle\prod_{b\neq a}\frac{\lambda
q^{(1-h_{b})}/\eta _{b}-\eta _{b}/q^{(1-h_{b})}\lambda }{\eta
_{a}q^{(h_{a}-h_{b})}/\eta _{b}-\eta _{b}/q^{(h_{a}-h_{b})}\eta _{a}}d(\eta
_{a}q^{-h_{a}}), \\
&&  \notag \\
\mathsf{B}(\lambda )||h_{1},...,h_{\mathsf{N}}\rangle &=&\sum_{a=1}^{\mathsf{N}}\text{T}_{a}^{+}||h_{1},...,h_{\mathsf{N}}\rangle\prod_{b\neq a}\frac{\lambda
q^{(1-h_{b})}/\eta _{b}-\eta _{b}/q^{(1-h_{b})}\lambda }{\eta
_{a}q^{(h_{a}-h_{b})}/\eta _{b}-\eta _{b}/q^{(h_{a}-h_{b})}\eta _{a}}a(\eta
_{a}q^{-h_{a}}).
\end{eqnarray}%
where:%
\begin{equation}
\text{T}_{a}^{\pm }||h_{1},...,h_{a},...,h_{\mathsf{N}}\rangle
=||h_{1},...,h_{a}\pm 1,...,h_{\mathsf{N}}\rangle .
\end{equation}%
Finally, $\mathsf{D}(\lambda )$ is uniquely defined by the quantum
determinant relation.
\end{theorem}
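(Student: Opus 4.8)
The plan is to carry over, essentially line for line, the proof of Theorem~\ref{Th-D-SOV}, interchanging the roles of $\mathsf{A}(\lambda)$ and $\mathsf{D}(\lambda)$ and shifting by a factor $q$ the evaluation points of the creation operators. The structural inputs are unchanged: the reference covector $\langle 0|$ and vector $|0\rangle$ are simultaneously $\mathsf{A}$- and $\mathsf{D}$-eigenstates with eigenvalues $a(\lambda)$ and $d(\lambda)$, and are annihilated respectively by $\mathsf{B}(\lambda)$ and by $\mathsf{C}(\lambda)$, cf.~\rf{L_ref-E}. The one genuinely new point to fix is that the $\mathsf{C}$'s (resp.\ $\mathsf{B}$'s) building the $\mathsf{A}$-eigenbasis must be evaluated at $\eta_n/q$ rather than at $\eta_n$; this is dictated by the quantum determinant identity $\mathsf{A}(\lambda)\mathsf{D}(\lambda/q)-\mathsf{B}(\lambda)\mathsf{C}(\lambda/q)=-a(\lambda)d(\lambda/q)$, which ties $\mathsf{A}(\lambda)$ to $\mathsf{D}(\lambda/q)$, together with the fact that $a(\eta_n/q)=0$ whereas $d(\eta_n)=0$.

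For the left case I would start from $\langle h_1,\dots,h_{\mathsf N}||\,\mathsf{A}(\lambda)$ and push $\mathsf{A}(\lambda)$ to the left through the ordered product $\prod_{n}\bigl(\mathsf{C}(\eta_n/q)\bigr)^{h_n}$ using the Yang--Baxter commutation relation between $\mathsf{C}$ and $\mathsf{A}$, i.e.\ the companion of \rf{DC-YBC} in which $\mathsf{D}$ is replaced by $\mathsf{A}$ and the spectral parameters $\lambda,\mu$ enter the scalar coefficients in the exchanged roles. Retaining at every step only the ``diagonal'' term and then letting $\mathsf{A}(\lambda)$ act on $\langle 0|$ produces the wanted contribution, equal to $\langle h_1,\dots,h_{\mathsf N}||$ times the scalar $a(\lambda)$ multiplied by the diagonal commutation coefficients evaluated at $\mu=\eta_n/q$ for the indices $n$ with $h_n=1$; an elementary rearrangement identifies this scalar with the eigenvalue $a_{\mathbf h}(\lambda)$ claimed in the statement (which specialises to $a(\lambda)$ at $\mathbf{h}=0$). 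Every remaining contribution uses an off-diagonal term at some step, which generates an operator $\mathsf{A}(\eta_n/q)$; after being commuted leftward past the remaining $\mathsf{C}$'s it acts on $\langle 0|$ and yields an overall factor $a(\eta_n/q)=0$, so all such terms drop — exactly as the factor $d(\eta_n)=0$ kills the unwanted terms in Theorem~\ref{Th-D-SOV}. This proves that $\langle h_1,\dots,h_{\mathsf N}||$ is an $\mathsf{A}(\lambda)$-eigencovector with eigenvalue $a_{\mathbf h}(\lambda)$; under \rf{E-SOV} the $2^{\mathsf N}$ functions $a_{\mathbf h}(\lambda)$ are pairwise distinct (their zeros, at $\eta_n$ or $\eta_n/q$, differ in at least one slot), whence the $\langle h_1,\dots,h_{\mathsf N}||$ are linearly independent and form an $\mathsf{A}(\lambda)$-eigenbasis of $\mathcal{L}_{\mathsf N}$. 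The right case is strictly parallel with $\mathsf{C}\leftrightarrow\mathsf{B}$ and $\langle 0|\leftrightarrow|0\rangle$, the unwanted terms again vanishing through $a(\eta_n/q)=0$.

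The actions of $\mathsf{B}(\lambda)$ and $\mathsf{C}(\lambda)$ then follow, as in the proof of Theorem~\ref{Th-D-SOV}, by interpolation. Both are Laurent polynomials in $\lambda$ of degree $\mathsf N-1$, even for $\mathsf N$ odd and odd for $\mathsf N$ even, hence of definite parity with exactly $\mathsf N$ free coefficients, so their action on $\langle h_1,\dots,h_{\mathsf N}||$ is pinned down by the $\mathsf N$ values at $\lambda=\eta_a q^{h_a-1}$, $a=1,\dots,\mathsf N$. These values are obtained from the definition of the states — appending a $\mathsf{C}(\eta_a/q)$ raises $h_a$ by one, up to the normalisation $d(\eta_a/q)$ already divided out — combined with the quantum determinant identity evaluated at the conjugate points (which converts $\mathsf{B}\mathsf{C}$ into $\mathsf{A}\mathsf{D}$) and the Yang--Baxter commutation rules. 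Lagrange interpolation on the nodes $\{\eta_a q^{h_a-1}\}_{a=1}^{\mathsf N}$ then reproduces the displayed formulae, the common denominator $\eta_a q^{h_a-h_b}/\eta_b-\eta_b/q^{h_a-h_b}\eta_a$ being precisely the numerator $\lambda q^{1-h_b}/\eta_b-\eta_b/q^{1-h_b}\lambda$ evaluated at $\lambda=\eta_a q^{h_a-1}$. Finally $\mathsf{D}(\lambda)$ is recovered from the quantum determinant, $\mathsf{D}(\lambda)=\mathsf{A}(\lambda q)^{-1}\bigl(\mathsf{B}(\lambda q)\mathsf{C}(\lambda)-a(\lambda q)d(\lambda)\bigr)$, which determines it on each $\mathsf{A}$-eigenvector as a Laurent polynomial in $\lambda$, since $a_{\mathbf h}(\lambda q)$ is nonzero for all but finitely many $\lambda$.

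The only delicate point I foresee is the combinatorial bookkeeping already present in the proof of Theorem~\ref{Th-D-SOV}: checking that the cascade of off-diagonal contributions produced while commuting $\mathsf{A}(\lambda)$ — or later $\mathsf{B}(\lambda),\mathsf{C}(\lambda)$ — past the string of $\mathsf{C}$'s (resp.\ $\mathsf{B}$'s) reorganises cleanly into terms each carrying the vanishing factor $a(\eta_n/q)$, with no surviving cross terms, and that the prefactors collected along the diagonal route collapse exactly to $a_{\mathbf h}(\lambda)$ and to the rational coefficients stated. Everything else — the degree and parity count for $\mathsf{B}$ and $\mathsf{C}$, the interpolation, and the distinctness of the $a_{\mathbf h}$ — is mechanical once the shift $\eta_n\to\eta_n/q$ imposed by the quantum determinant is correctly installed.
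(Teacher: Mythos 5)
Your proposal is correct and coincides with the paper's intended argument: the paper itself only remarks that the appendix theorem ``can be given along the same lines used to prove Theorem \ref{Th-D-SOV}'', and your elaboration correctly identifies the one substantive substitution, namely that the creation operators are evaluated at $\eta_n/q$ so that the vanishing factor $a(\eta_n/q)=0$ plays the role that $d(\eta_n)=0$ plays in the $\mathsf{D}$-eigenbasis construction, with the eigenvalue, interpolation and quantum-determinant steps carrying over verbatim.
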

Note that the proof of this theorem can be given along the same lines used to prove the Theorem \ref{Th-D-SOV}.
\begin{small}

\end{small}

\end{document}